\DeclareMathOperator*{\argmin}{arg\,min}
\newcommand{\bs}{\backslash}
\newcommand{\ignore}[1]{}
\newcommand{\citep}[1]{\cite{#1}}
\newcommand{\OPT}{{\rm OPT}}
\newcommand{\OPTHS}{{\rm HS}}
\newcommand{\DEN}{{\rm DEN}}
\newcommand{\ETD}{{\rm ETD}}
\newcommand{\SETD}{{\rm SETD}}
\newcommand{\TD}{{\rm TD}}
\newcommand{\HS}{{\rm HS}}
\newcommand{\NA}{{\rm N}}
\renewcommand{\O}{\emptyset}
\newcommand{\MAX}{{\rm MAX}}
\newcommand{\MAJ}{{\rm MAJ}}
\newcommand{\MIMA}{{\rm MAMI}}
\newcommand{\MAMI}{{\rm MAMI}}
\newcommand{\FF}{{\cal F}}
\newcommand{\DanaF}[2]{{\color{blue} #1 {{\color{mygray} (old: #2)}}}}
\newcommand{\DanaFD}[2]{#1}
\newcommand{\Ray}{{\rm Ray}}
\newcommand{\imply}{{\Rightarrow}}
\newcommand{\De}{{\rm De}}
\newcommand{\As}{{\rm As}}
\newcommand{\Ne}{{\rm Ne}}
\newcommand{\DE}{{\rm DE}}
\newcommand{\AS}{{\rm AS}}
\newcommand{\lca}{{\rm lca}}
\newcommand{\A}{{\cal A}}
\newtheorem{theorem}{Theorem}
\newtheorem{lemma}[theorem]{Lemma}
\newtheorem{definition}[theorem]{Definition}
\newlength{\boxwidth}
\newcommand{\singlespacing}%
{\small\normalsize}
\begin{document}

\title{On Polynomial time Constructions of\\ Minimum Height Decision Tree}
\author{Nader H. Bshouty\ \ \ \ Waseem Makhoul}
\institute{Technion, Haifa, Israel\\
bshouty@ca.technion.ac.il}
\maketitle

\begin{abstract}
A decision tree $T$ in $B_m:=\{0,1\}^m$ is a binary tree where each of its internal nodes is labeled with an integer in $[m]=\{1,2,\ldots,m\}$, each leaf is labeled with an assignment $a\in B_m$ and each internal node has two outgoing edges that are labeled with $0$ and $1$, respectively. Let $A\subset \{0,1\}^m$. We say that $T$ is a decision tree for $A$ if (1) For every $a\in A$ there is one leaf of $T$ that is labeled with $a$. (2) For every path from the root to a leaf with internal nodes labeled with $i_1,i_2,\ldots,i_k\in[m]$, a leaf labeled with $a\in A$
and edges labeled with $\xi_{i_1},\ldots,\xi_{i_k}\in\{0,1\}$, $a$ is the only element in $A$ that satisfies $a_{i_j}=\xi_{i_j}$ for all $j=1,\ldots,k$.

Our goal is to write a polynomial time (in $n:=|A|$ and $m$) algorithm that for an input $A\subseteq {B_m}$ outputs a decision tree for $A$ of minimum depth.
This problem has many applications that include, to name a few, computer vision, group testing, exact learning from membership queries and game theory.

Arkin et al. and Moshkov~\citep{AMMRS98,M04} gave
a polynomial time $(\ln |A|)$- approximation
algorithm (for the depth).
The result of Dinur and Steurer~\citep{DS14} for set cover
implies that this problem cannot be approximated with ratio $(1-o(1))\cdot \ln |A|$,
unless P=NP. Moskov studied in~\citep{M04} the combinatorial measure of extended teaching dimension of $A$, $\ETD(A)$. He showed that $\ETD(A)$ is a lower bound for the depth of the decision tree for $A$ and then gave an {\it exponential time} $\ETD(A)/\log(\ETD(A))$-approximation algorithm.

In this paper we further study the $\ETD(A)$ measure
and a new combinatorial measure, $\DEN(A)$, that we call the density of the set $A$. We show that $\DEN(A)$ $\le \ETD(A)+1$. We then give two results. The first result is that the lower bound $\ETD(A)$ of Moshkov for the depth of the decision tree for $A$ is greater than the bounds that are obtained by the classical technique used in the literature. The second result is a polynomial time $(\ln 2)\DEN(A)$-approximation (and therefore $(\ln 2)\ETD(A)$-approximation) algorithm for the depth of the decision tree of $A$.
We also show that a better approximation ratio implies P=NP.

We then apply the above results to learning the class of disjunctions of predicates from membership queries~\citep{BDVY17}. We show that the $\ETD$ of this class is bounded from above by the degree $d$ of its Hasse diagram. We then show that Moshkov algorithm can be run in polynomial time and is $(d/\log d)$-approximation algorithm. This gives optimal algorithms when the degree is constant. For example, learning axis parallel rays over constant dimension space.
\end{abstract}
\section{Introduction}\label{Int}

Consider the following
problem: Given an $n$-element set $A\subseteq B_m:=\{0,1\}^m$ from some class of sets ${\cal A}$
and a hidden element $a\in A$.
Given an oracle that answers queries of the type: ``What is the value of $a_i$?''.
Find a polynomial time algorithm that with an input $A$, asks minimum number
of queries to the oracle and finds the hidden element~$a$.
This is equivalent to constructing a minimum height decision tree
for $A$. A decision tree is a binary tree
where each internal node is labeled with an index from $[m]$
and each leaf is labeled with an assignment $a\in B_m$.
Each internal node has two outgoing edges one that is labeled with~$0$
and the other is labeled with $1$. A node that is labeled with
$i$ corresponds to the query ``Is $a_i=0$?''. An edge that is labeled with $\xi$
corresponds to the answer $\xi$. This decision
tree is an algorithm in an obvious way and its height is the
worst case complexity of the number of queries.
A decision tree $T$ is said to be a {\it decision tree for} $A$
if the algorithm that corresponds to $T$ predicts correctly the hidden assignment $a\in A$.
Our goal is to construct
a small height decision tree for $A\subseteq B_m$ in time polynomial in $m$ and $n:=|A|$.
We will denote by $\OPT(A)$ the minimum height decision tree for $A$.

This problem is related to the following problem in exact learning~\citep{A87}:
Given a class $C$ of boolean functions $f:X\to \{0,1\}$.
Construct in $poly(|C|,|X|)$ time an optimal adaptive algorithm that learns $C$ from membership queries. This learning problem is equivalent to constructing a
minimum height decision tree for the set $A=\{a^{(i)}|a^{(i)}_j=f_i(x_j)\}$
where $f_i$ is the $i$th function in $C$ and $x_j$ is the $j$th instance
in $X$. In computer vision the problem is related to minimizing the
number of ``probes'' (queries) needed to determine which one of a finite
set of geometric figures is present in an image~\citep{AMMRS98}. In game theory
the problem is related to the minimum number of turns required in order to win a guessing game.

\subsection{Previous and New Results}

In \citep{AMMRS98}, Arkin et al. showed that (AMMRS-algorithm) if at every node
the decision tree chooses $i$ that partitions the
current set (the set of assignments that are consistent to the answers of the
queries so far) as evenly as possible,
then the height of the tree is within a factor of $\log |A|$ from optimal. I.e., $\log |A|$-approximation algorithm.
Moshkov~\citep{M04} analysis shows that this algorithm
is $(\ln |A|)$-approximation algorithm.
This algorithm runs in polynomial time in $m$ and $|A|$.

Hyafil and Rivest, \citep{HR76}, show
that the problem of constructing a minimum depth decision tree is NP-Hard.
The reduction of Laber and Nogueira, \citep{LN04} to set cover with
the inapproximability result of Dinur and Steurer~\citep{DS14} for set cover
implies that it cannot be approximated to a factor of $(1-o(1))\cdot \ln |A|$
unless P=NP.  Therefore,
no better approximation ratio can be obtained if no
constraint is added to the set $A$.

Moshkov, \citep{M83}, studied the extended teaching dimension combinatorial measure, $\ETD(A)$, of a set $A\subseteq B_m$. It is the maximum over all the possible assignments $b\in B_m$ of the minimum number of indices $I\subset [m]$ in which $b$ agrees with at most one $a\in A$.
Moshkov showed two results. The first is that $\ETD(A)$ is a lower bound for $\OPT(A)$. The second is an exponential time algorithm that asks $(2\ETD(A)/\log \ETD(A))\log n$ queries. This gives a $(\ln 2)$ $(\ln |A|)/\log \ETD(A)$ -approximation (exponential time) algorithm (since $\OPT(A)\ge \ETD(A)$) and at the same time $2\ETD$ $(A)/\log \ETD(A)$-approximation algorithm (since $\OPT(A)\ge \log |A|$). Since many interesting classes have small $\ETD$ dimension, the latter result gives small approximation ratio but unfortunately Moshkov algorithm runs in exponential time.

In this paper we further study the $\ETD$ measure. We show that any polynomial time $(1-o(1))\ETD(C)$-approximation algorithm implies P=NP. Therefore, Moshkov algorithm cannot run in polynomial time unless P=NP. We then show that the above AMMRS-algorithm, \citep{AMMRS98}, is polynomial time $(\ln 2)\ETD(C)$-approximation algorithm. This gives a small approximation ratio for classes with small extended teaching dimension.

Another reason for studying the \ETD\ of classes is the following: If you find the $\ETD$ of the set $A$ then you either get a lower bound that is better than the information theoretic lower bound $\log |A|$ or you get an approximation algorithm with a better ratio than $\ln |A|$. This is because if $\ETD(A)<\log |A|$ then the AMMRS-algorithm has a ratio $(\ln 2)\ETD(A)$ that is better than the $\ln |A|$ ratio and if $\ETD(A)>\log |A|$ then Moshkov lower bound, $\ETD(A)$, for $\OPT(A)$ is better than the information theoretic lower bound $\log |A|$.

To get the above results, we define a new combinatorial measure called the {\it density} $\DEN(A)$ of the set $A$. If $Q=\DEN(A)$ then there is a subset $B\subseteq A$ such that an adversary can give answers to the queries that eliminate at most $1/Q$ fraction of the number of elements in $B$. This forces the learner to ask at least $Q$ queries. We then show that $\ETD(A)\ge \DEN(A)-1$. On the other hand, we show that if $Q=\DEN(A)$ then a query in the AMMRS-algorithm eliminates at least $(1-1/Q)$ fraction of the assignments in $A$. This gives a polynomial time $(\ln 2)\DEN(A)$-approximation algorithm which is also a $(\ln 2)(\ETD(A)+1)$-approximation algorithm.

In order to compare both algorithms we show that $(\ETD(A)-1)/\ln |A|\le \DEN(A)\le \ETD(A)+1$ and for random uniform $A$ (and therefore for almost all $A$), with high probability $\DEN(A)=\Theta(\ETD(A)/\ln |A|)$. Since $|A|>\ETD(A)$, this shows that AMMRS-algorithm may get a better approximation ratio than Moshkov algorithm.

The inapproximability results follows from the reduction of Laber and Nogueira, \citep{LN04} to set cover with
the inapproximability result of Dinur and Steurer~\citep{DS14} and the fact that $\DEN(A)\le \ETD(A)+1\le\OPT(A)+1$.

We then apply the above results to learning the class of disjunctions of predicates from a set of predicates $\FF$ from membership queries~\citep{BDVY17}. We show that the $\ETD$ of this class is bounded from above by the degree $d$ of its Hasse diagram. We then show that Moshkov algorithm, for this class, runs in {\it polynomial time} and is $(d/\log d)$-approximation algorithm. Since $|\FF|\ge d$ (and in many applications, $|\FF|\gg d$), this improves the $|\FF|$-approximation algorithm SPEX in~\citep{BDVY17} when the size of Hasse diagram is polynomial. This also gives optimal algorithms when the degree $d$ is constant. For example, learning axis parallel rays over constant dimension space.

\section{Definitions and Preliminary Results}\label{Se2}
In this section we give some definitions and preliminary results

\subsection{Notation}
Let $B_m=\{0,1\}^m$. Let $A=\{a^{(1)},\ldots,a^{(n)}\}\subseteq B_m$ be an $n$-element set. We will write $|A|$ for the number of elements in $A$. For $h\in B_m$ we define $A+ h=\{a+h|a\in A\}$ where $+$ (in the square brackets) is the bitwise exclusive or of elements in $B_m$. \ignore{We define $z(A)=1$ if $A$ contains a zero vector and $z(A)=0$ otherwise.}

For integer $q$ let $[q]=\{1,2,\ldots,q\}$.
Throughout the paper, $\log x=\log_2x$.

\subsection{Optimal Algorithm}
\ignore{A {\it decision tree} $T$ in $B_m$ is a binary tree where each internal nodes
is labeled with an integer in $[m]$, each leaf is labeled with $a\in B_m$ and each internal node has two outgoing edges that are labeled with $0$ and $1$, respectively. We say that $T$ is a decision tree for $A$ if every $a\in A$ is in some leaf of $T$ and for every path from the root to a leaf with internal nodes labeled with $i_1,i_2,\ldots,i_k\in[m]$, a leaf labeled with $a\in A$
and edges labeled with $\xi_{i_1},\ldots,\xi_{i_k}\in\{0,1\}$, $a$ is the only assignment in $A$ that satisfies $a_{i_r}=\xi_{i_r}$ for all $r=1,\ldots,k$.}

We denote by $\OPT(A)$ the minimum depth of a decision tree for $A$.
Our goal is to build a decision tree for $A$ with small depth.

\ignore{When, for each level $\ell$ of the decision tree $T$, all the nodes at that level contain the same index $i_\ell\in [m]$ then we call the tree a {\it non-adaptive decision tree} for $A$. The minimum depth of a non-adaptive decision tree for $A$ is denoted by $\OPT_{\NA}(A)$. Obviously, each $j\in [n]$ occurs in exactly one leaf and therefore}
Obviously
\begin{eqnarray}\label{logli}
\log n\le \OPT(A)\le n-1
\end{eqnarray}
where $n:=|A|$.

\ignore{The above problem is equivalent to the following problem. Give a $A\subseteq B_m$. Two players P1 and P2 know the set $A$.
Player P1 hides an element $a\in A$ and player P2 goal is to find this element with minimum number of questions of the form ``What is the $i$th entry of the hidden $a$?''.
Then $\OPT(A)$ is the minimum number
of questions that is needed to find the hidden column
with unlimited computational power.

In learning theory, $\OPT(A)$ is the minimum number of membership queries that is needed to learn the class of boolean functions $C_A:=\{f_a:[m]\to \{0,1\}\ |\ a\in A, f_a(i)=a_{i}\}$ with an adaptive algorithm. }

The following result is easy to prove (see Appendix A)
\begin{lemma}\label{optaph}  We have
$\OPT(A)=\OPT(A+h).$
\end{lemma}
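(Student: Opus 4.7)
The plan is to show each direction by an explicit structure-preserving transformation of decision trees, using that the map $a \mapsto a + h$ on $B_m$ is an involution (since $h + h = 0$).

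First I would describe the transformation. Given any decision tree $T$ for $A$, I build a tree $T'$ with exactly the same underlying tree shape and the same internal-node labels in $[m]$, but with two modifications: (i) at every internal node whose label is an index $i$ with $h_i = 1$, I swap the labels $0$ and $1$ on its two outgoing edges; (ii) at every leaf labeled $a \in A$, I replace the label by $a + h \in A + h$.

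Next I would verify that $T'$ is a decision tree for $A + h$. For condition (1), the leaves of $T'$ are in bijection with the leaves of $T$ via $a \leftrightarrow a+h$, and since $a$ ranges over all of $A$ in $T$, the leaves of $T'$ cover $A + h$ with one leaf per element. For condition (2), consider any root-to-leaf path in $T'$ with internal-node labels $i_1, \dots, i_k$ and edge labels $\xi'_{i_1}, \dots, \xi'_{i_k}$, ending at a leaf labeled $a + h$. By construction, the corresponding path in $T$ has the same internal-node labels and edge labels $\xi_{i_j} = \xi'_{i_j} + h_{i_j}$, and ends at the leaf labeled $a$. Since $T$ is a decision tree for $A$, the element $a$ is the unique element of $A$ with $a_{i_j} = \xi_{i_j}$ for all $j$. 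Then $(a+h)_{i_j} = a_{i_j} + h_{i_j} = \xi_{i_j} + h_{i_j} = \xi'_{i_j}$, and any $b' \in A + h$ satisfying $b'_{i_j} = \xi'_{i_j}$ for all $j$ gives $b := b' + h \in A$ satisfying $b_{i_j} = \xi_{i_j}$, forcing $b = a$ and hence $b' = a+h$.

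The transformation preserves depth, so $\OPT(A + h) \le \OPT(A)$. Applying the same construction to $A + h$ with the same vector $h$ (using $(A+h)+h = A$) yields the reverse inequality $\OPT(A) \le \OPT(A+h)$, and equality follows. There is no serious obstacle here; the only care needed is the edge-label swap at nodes with $h_i = 1$, which is exactly what makes the path conditions line up on both sides.
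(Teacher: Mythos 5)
Your proposal is correct and is essentially the same argument as the paper's: the paper also swaps the edge labels at nodes whose index $j$ has $h_j=1$, relabels each leaf $a$ by $a+h$, and invokes $(A+h)+h=A$ for the reverse inequality. You simply spell out the verification of the path condition that the paper leaves as ``easy to show.''
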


\subsection{Extended Teaching Dimension}
In this section we define the extended teaching dimension.

Let $h\in B_m$ be any element.
We say that
a set $S\subseteq [m]$ is a {\it specifying set for $h$ with respect to $A$}
if $|\{a\in A\ |\ (\forall i\in S) h_i=a_i\}|\le 1$. That is, there is
at most one element in $A$ that is {\it consistent with} $h$ on the entries of~$S$. Denote by $\ETD(A,h)$
the minimum size of a specifying set for $h$ with respect to $A$.
The {\it extended teaching dimension} of $A$ is
\begin{eqnarray}\label{defETD}
\ETD(A)=\max_{h\in B_m} \ETD(A,h).
\end{eqnarray}
We will write $\ETD z(A)$ for $\ETD(A,0)$. It is easy to see that
\begin{eqnarray}\label{sah}
\ETD(A,h)=\ETD z(A+h)\mbox{\ and\ }\ETD(A)=\ETD(A+h).
\end{eqnarray}

We say that a set $S\subseteq [m]$ is a {\it strong specifying set for $h$ with respect to $A$}
if either $h\in A$ and $|\{a\in A\ |\ (\forall i\in S) h_i=a_i\}|= 1$, or $|\{a\in A\ |\ (\forall i\in S) h_i=a_i\}|= 0$. That is, if $h\in A$ then there is
exactly one element in $A$ that is {\it consistent with} $h$ on the entries of~$S$. Otherwise, no element in $A$
is consistent with $h$ on $S$. Denote $\SETD(A,h)$
the minimum size of a strong specifying set for $h$ with respect to $A$.
The {\it strong extended teaching dimension} of $A$ is
\begin{eqnarray}\label{setd}
\SETD(A)=\max_{h\in B_m} \SETD(A,h).
\end{eqnarray}
We will write $\SETD z(A)$ for $\SETD(A,0)$.
It is easy to see that
\begin{eqnarray}\label{ssah}
\SETD(A,h)=\SETD z(A+h)\mbox{\ and \ }\SETD(A)=\SETD(A+h).
\end{eqnarray}
Obviously, $\ETD(A,h)\le \min(m,n-1)\ \mbox{and}\ \ETD(A,h)\le \SETD(A,h)\le \min(m,n)$
\ignore{\begin{eqnarray}\label{snss}

\end{eqnarray}}

We now show
\begin{lemma} \label{SSS} We have
$\ETD(A,h)\le \SETD(A,h)\le \ETD(A,h)+1$ and therefore
$\ETD(A)\le \SETD(A)\le \ETD(A)+1.$
\end{lemma}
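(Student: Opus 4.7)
The first inequality $\ETD(A,h)\le \SETD(A,h)$ I would argue directly from the definitions: every strong specifying set satisfies the upper bound $|\{a\in A:(\forall i\in S)\,h_i=a_i\}|\le 1$, so it is in particular a specifying set, and minimizing over a subfamily can only enlarge the optimum.

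For the second inequality $\SETD(A,h)\le \ETD(A,h)+1$, my plan is to take a minimum specifying set $S$ for $h$ (so $|S|=\ETD(A,h)$) and modify it, in at most one step, into a strong specifying set. Let $N(S)=\{a\in A:(\forall i\in S)\,h_i=a_i\}$, so by assumption $|N(S)|\le 1$. I would distinguish three cases. First, if $h\in A$, then $h\in N(S)$ automatically, and combined with $|N(S)|\le 1$ this forces $|N(S)|=1$, so $S$ itself is already a strong specifying set, giving $\SETD(A,h)\le |S|=\ETD(A,h)$. Second, if $h\notin A$ and $N(S)=\emptyset$, again $S$ is already a strong specifying set. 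Third, if $h\notin A$ and $N(S)=\{a^*\}$ for some $a^*\in A$, then $a^*\ne h$, so there exists $j\in[m]$ with $a^*_j\ne h_j$; take $S'=S\cup\{j\}$. Any element of $A$ consistent with $h$ on $S'$ is, in particular, consistent on $S$, hence equals $a^*$, but $a^*$ is ruled out by the coordinate $j$. Thus $N(S')=\emptyset$ and $S'$ is a strong specifying set of size $\ETD(A,h)+1$.

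Finally, the inequality on the maxima $\ETD(A)\le \SETD(A)\le \ETD(A)+1$ follows by applying the pointwise bounds inside the maximum over $h\in B_m$ in the definitions (\ref{defETD}) and (\ref{setd}).

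I do not expect a real obstacle here; the only subtlety is remembering that when $h\in A$ the definition of a strong specifying set forces exactly one consistent element (namely $h$ itself), which is precisely what a specifying set of size $\ETD(A,h)$ already gives for free, so no extra coordinate is needed in that case.
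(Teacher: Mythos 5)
Your proposal is correct and follows essentially the same route as the paper's proof: note that a strong specifying set is in particular a specifying set, and augment a minimum specifying set by at most one coordinate $j$ with $a^*_j\ne h_j$ in the only nontrivial case ($h\notin A$ with one surviving element). The case analysis is identical in substance, just split slightly more finely than in the paper.
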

\begin{proof}
The fact $\ETD(A,h)\le \SETD(A,h)$ follows from the definitions.
Let $S\subseteq [m]$ be a specifying set for $h$ with respect to $A$. Then for
$T:=\{a\in A\ |\ (\forall i\in S) h_i=a_i\}$ we have $t:=|T|\le 1$. If $t=0$ or $h\in A$ then $S$ is a strong specifying set for $h$ with respect to $A$. If $t=1$ and $h\not \in A$ then for the element $a\in T$ there is $j\in [m]$ such that $a_j\not=h_j$ and then $S\cup\{j\}$ is a strong specifying set for $h$ with respect to $A$. This proves that $\SETD(A,h)\le \ETD(A,h)+1$.

The other claims follows immediately.\qed
\end{proof}

Obviously, for any $B\subseteq A$
\begin{eqnarray}\label{EBA}
\ETD(B)\le \ETD(A),\ \ \ \SETD(B)\le \SETD(A).
\end{eqnarray}

\subsection{Hitting Set}
In this section
we define the hitting set for $A$.

A {\it hitting set for $A$} is a set $S\subseteq [m]$ such that for every non-zero element $a\in A$ there is $j\in S$ such that $a_j=1$. That is, $S$ {\it hits} every element in $A$ except the zero element (if it exists). The size of the minimum size hitting set for $A$ is denoted by $\OPTHS(A)$.

We now show
\begin{lemma} \label{HSSS} We have
$\HS(A)= \SETD z(A).$
In particular, $\SETD(A,h)=\HS(A+ h)$
and
$\SETD(A)=\max_{h\in B_m} \HS(A+ h).$
\end{lemma}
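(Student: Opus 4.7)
The plan is to unwind both definitions and verify that, when restricted to $h=0$, a strong specifying set is literally the same combinatorial object as a hitting set; then the general identities follow from (\ref{ssah}).

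First I would fix $S\subseteq [m]$ and consider the set $T:=\{a\in A\ |\ (\forall i\in S)\ a_i=0\}$, since this is precisely the set appearing in the definition of $\SETD(A,0)$. Note that $0\in A$ implies $0\in T$ automatically, so the strong specifying condition for $h=0$ reads: $T=\{0\}$ if $0\in A$, and $T=\emptyset$ if $0\notin A$. I would then observe that in both cases this is equivalent to the single statement \emph{every non-zero $a\in A$ has some coordinate $i\in S$ with $a_i=1$}, which is exactly the definition of $S$ being a hitting set for $A$. Thus the families of strong specifying sets for $0$ and of hitting sets for $A$ coincide, and taking minimum cardinality yields $\SETD z(A)=\HS(A)$.

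For the second assertion, I would simply apply (\ref{ssah}): $\SETD(A,h)=\SETD z(A+h)=\HS(A+h)$. Maximizing over $h\in B_m$ then gives $\SETD(A)=\max_{h\in B_m}\HS(A+h)$, as required.

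The only subtlety — and the one point that deserves a careful case split — is handling the presence or absence of $0$ in $A$ in the strong specifying condition; once one sees that both cases collapse to the same non-zero hitting condition (because $0\in T$ is free whenever $0\in A$), the rest is bookkeeping via (\ref{ssah}).
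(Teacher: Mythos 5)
Your proposal is correct and follows essentially the same route as the paper's own proof: unwind the definition of $\SETD z(A)$ with a case split on whether $0\in A$, observe that both cases reduce to the hitting condition on the non-zero elements of $A$, and then derive the general statements from (\ref{ssah}). The only difference is that you spell out the equivalence of the two families of sets slightly more explicitly, which is fine.
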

\begin{proof}
If $0\in A$ then $\SETD z(A)$ is the minimum size of a set $S$
such that $\{a\in A\ |\ (\forall i\in S) a_i=0\}=\{0\}$ and if $0\not\in A$ then it is the minimum size of a set $S$ such that $\{a\in A\ |\ (\forall i\in S) a_i=0\}=\O$. Therefore the set $S$ hits all the nonzero elements in $A$.

The other results follow from (\ref{ssah}) and the definition of $\SETD$.\qed
\end{proof}

\subsection{Density of a Set}\label{DM}
In this section we define our new measure $\DEN$ of a set.

Let $A=\{a^{(1)},\ldots,a^{(n)}\}\subseteq B_m$. We define $\MAJ(A)\in B_m$ such that $\MAJ(A)_i=1$ if the number of ones in $(a^{(1)}_i,\cdots,a^{(n)}_i)$ is greater or equal the number of zeros and $\MAJ(A)_i=0$ otherwise.  We denote by $\MAX(A)$ the maximum number of ones in $(a^{(1)}_i,\cdots,a^{(n)}_i)$ over all $i=1,\ldots,m$. Let
\begin{eqnarray}\label{mima}
\MIMA(A)=\min_{h\in B_m} \MAX(A+ h)=\MAX(A+ \MAJ(A)).
\end{eqnarray}
For $j\in [m]$ and $\xi\in\{0,1\}$ let $A_{j,\xi}=\{a\in A\ |\ a_j=\xi\}$. Then
\begin{eqnarray}\label{mima2}
\MIMA(A)=\max_j\min(|A_{j,0}|,|A_{j,1}|).
\end{eqnarray}

We define the {\it density} of a set $A\subseteq B_m$ by

\begin{eqnarray}\label{den}
\DEN(A)= \max_{B\subseteq A} \frac{|B|-1}{\MIMA(B)}.
\end{eqnarray}

Notice that since every $j\in [m]$ can hit at most $\MAX(A)$ elements in $A$ we have
\begin{eqnarray}\label{HIT}
\HS(A)\ge \frac{|A|-1}{\MAX(A)}.
\end{eqnarray}

\section{Bounds for $\OPT$}
In this section we give upper and lower bounds for $\OPT$.

\subsection{Lower Bound}
Moshkov results in~\citep{M83,H95} and the information theoretic bound in (\ref{logli}) give the following lower bound. We give the proof in Appendix A for completeness.
\begin{lemma}~\citep{M83,H95} \label{LBo1} Let $A\subseteq B_m$ be any set. Then
$$\OPT(A)\ge \max(\ETD(A),\log |A|).$$
\end{lemma}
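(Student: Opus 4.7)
The plan is to prove the two lower bounds separately, then take the max. The information-theoretic bound $\OPT(A)\ge \log|A|$ is already recorded in equation~(\ref{logli}) and follows from the standard leaf-counting argument: a binary tree of depth $d$ has at most $2^d$ leaves, and the defining property of a decision tree for $A$ forces each $a\in A$ to label at least one leaf, so $2^{\OPT(A)}\ge |A|$.

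For the substantive part, $\OPT(A)\ge \ETD(A)$, I would fix an optimal decision tree $T$ for $A$ of depth $\OPT(A)$ and show that for every $h\in B_m$ one can extract a specifying set for $h$ of size at most $\OPT(A)$, which by the definition~(\ref{defETD}) yields $\ETD(A)\le \OPT(A)$. Given $h\in B_m$, start at the root of $T$ and descend, answering each internal-node query $i$ with the bit $h_i$; this walks down a single root-to-leaf path because every internal node has outgoing edges labeled $0$ and $1$. Let $i_1,\dots,i_k$ be the indices queried along this path, so $k\le \OPT(A)$, and let $a\in A$ be the label at the reached leaf.

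By the second decision-tree condition in the abstract, $a$ is the only element of $A$ consistent with the answers $\xi_{i_j}=h_{i_j}$ on $S:=\{i_1,\dots,i_k\}$, i.e., $|\{a'\in A\mid (\forall j)\ a'_{i_j}=h_{i_j}\}|=1\le 1$. Hence $S$ is a specifying set for $h$ with respect to $A$, and $\ETD(A,h)\le |S|\le \OPT(A)$. Taking the maximum over $h$ gives $\ETD(A)\le \OPT(A)$, and combining with the information-theoretic bound yields the lemma.

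No real obstacle arises here; the only subtlety is ensuring the path-tracing argument is valid, which it is because $T$ is deterministic and complete in the sense that each internal node branches on both values of its queried bit, so the path defined by $h$ must reach a leaf, and the uniqueness guarantee at that leaf is exactly what the specifying-set definition demands.
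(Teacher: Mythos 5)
Your proof is correct and follows essentially the same approach as the paper: extract a specifying set for $h$ from a root-to-leaf path of an optimal decision tree, combine with the information-theoretic bound. The only cosmetic difference is that you trace the path dictated by $h$ directly for every $h\in B_m$, whereas the paper traces the all-zeros path to get $\OPT(A)\ge \ETD z(A)$ and then transfers this to arbitrary $h$ via the translation invariances $\OPT(A)=\OPT(A+h)$ and $\ETD(A,h)=\ETD z(A+h)$ (Lemma~\ref{optaph} and equation~(\ref{sah})).
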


Many lower bounds in the literature for $\OPT(A)$ are based on finding a subset $B\subseteq A$
such that for each query there is an answer that
eliminates at most small fraction $E$ of $B$. Then $(|B|-1)/E$ is a lower bound for $\OPT(A)$.
The best possible bound that one can get using this technique is exactly $\DEN(A)$ (Lemma~\ref{FLB}),
the density defined in Section~\ref{DM}.
Lemma~\ref{SLB} shows that the lower bound $\ETD(A)$ for $\OPT(A)$ exceeds any such bound.

In Appendix A we prove
\begin{lemma}\label{FLB}
We have
$\OPT(A)\ge \DEN(A).$
\end{lemma}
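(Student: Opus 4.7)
Let $Q := \DEN(A)$ and fix $B \subseteq A$ attaining the maximum in the definition of $\DEN$, so that $Q = (|B|-1)/M$ where $M := \MIMA(B)$. The plan is an adversary argument: given any decision tree $T$ for $A$, I construct answers along a root-to-leaf path of length at least $Q$, which forces $\OPT(A) \geq Q$.

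The adversary maintains a set $C_t \subseteq B$ of elements of $B$ that remain consistent with the answers given through step $t$, starting with $C_0 := B$. Whenever $T$ queries an index $j_t$ at the current node, the adversary replies with $\xi_t \in \{0,1\}$ chosen to maximize $|C_{t-1} \cap A_{j_t,\xi_t}|$, and sets $C_t := C_{t-1} \cap A_{j_t,\xi_t}$. By the definition of $\MIMA$ the number of elements eliminated at this step is at most $\MIMA(C_{t-1})$, i.e.,
$$|C_{t-1}| - |C_t| = \min\bigl(|(C_{t-1})_{j_t,0}|,\,|(C_{t-1})_{j_t,1}|\bigr) \leq \MIMA(C_{t-1}).$$

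The key step is the monotonicity claim: whenever $C \subseteq B$, $\MIMA(C) \leq \MIMA(B)$. Indeed, $|C_{k,\xi}| \leq |B_{k,\xi}|$ for every $k$ and $\xi$, so $\min(|C_{k,0}|, |C_{k,1}|) \leq \min(|B_{k,0}|, |B_{k,1}|)$, and taking the maximum over $k$ yields the claim. Applied to $C_{t-1} \subseteq B$, this gives $|C_t| \geq |C_{t-1}| - M$, and iterating yields $|C_t| \geq |B| - tM$ for all $t \geq 0$.

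Finally, the adversary's answers trace a path from the root of $T$ down to some leaf at depth $L \leq \mathrm{depth}(T)$. By the definition of a decision tree for $A$, the leaf's label is the unique element of $A$ consistent with the sequence of answers, so the set of consistent elements of $A$ has size at most one; in particular $|C_L| \leq 1$. Combined with $|C_L| \geq |B| - LM$, this yields $LM \geq |B| - 1$, i.e.\ $L \geq Q$. Hence $\mathrm{depth}(T) \geq Q$ for every decision tree $T$ for $A$, so $\OPT(A) \geq \DEN(A)$. The one step requiring care is the monotonicity $\MIMA(C) \leq \MIMA(B)$ for $C \subseteq B$; once this is noted, the rest is bookkeeping in the recursion.
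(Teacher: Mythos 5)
Your proof is correct and is essentially the paper's argument: the paper also restricts to the subset $B$ achieving the density and runs an adversary that loses at most $\MIMA(B)$ elements of $B$ per query, concluding $L\ge (|B|-1)/\MIMA(B)$. The only cosmetic difference is that the paper's adversary answers according to the fixed vector $\MAJ(B)$, so each query $i$ eliminates at most the minority count of column $i$ in $B$ directly, which sidesteps the (easy) monotonicity step $\MIMA(C)\le\MIMA(B)$ that your adaptive adversary needs.
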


\begin{lemma}\label{SLB} We have
$\ETD(A)\ge \DEN(A)-1.$
\end{lemma}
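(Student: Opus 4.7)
The plan is to chain together the inequalities already established in the preliminary results, routing through $\SETD$ and the hitting set characterization. Since Lemma~\ref{SSS} gives $\SETD(A) \le \ETD(A)+1$, it suffices to show $\SETD(A) \ge \DEN(A)$, and then the target inequality $\ETD(A) \ge \DEN(A)-1$ falls out immediately.

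To prove $\SETD(A) \ge \DEN(A)$, I would fix an arbitrary subset $B \subseteq A$ and show $\SETD(A) \ge (|B|-1)/\MIMA(B)$; taking the maximum over $B$ then yields the bound. First I would invoke the subset monotonicity in (\ref{EBA}) to get $\SETD(A) \ge \SETD(B)$. Then by Lemma~\ref{HSSS},
\begin{eqnarray*}
\SETD(B) = \max_{h\in B_m} \HS(B+h) \ \ge\ \HS(B+\MAJ(B)).
\end{eqnarray*}
Next I would apply the elementary hitting set bound (\ref{HIT}) to the shifted set $B+\MAJ(B)$, together with the identity (\ref{mima}) which says $\MAX(B+\MAJ(B))=\MIMA(B)$, to conclude
\begin{eqnarray*}
\HS(B+\MAJ(B)) \ \ge\ \frac{|B+\MAJ(B)|-1}{\MAX(B+\MAJ(B))} \ =\ \frac{|B|-1}{\MIMA(B)}.
\end{eqnarray*}
Combining these gives $\SETD(A) \ge (|B|-1)/\MIMA(B)$ for every $B \subseteq A$, so $\SETD(A) \ge \DEN(A)$ by the definition in (\ref{den}).

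Finally I would conclude with Lemma~\ref{SSS}: $\ETD(A)+1 \ge \SETD(A) \ge \DEN(A)$, which is the desired bound. I do not anticipate a real obstacle here — every step is a direct application of a previously proved lemma, and the only modest design choice is to recognize that shifting $B$ by $\MAJ(B)$ is exactly what converts the $\MAX$ appearing in the hitting-set lower bound into the $\MIMA$ appearing in the density. The slight subtlety worth double-checking in the write-up is that we are free to apply (\ref{EBA}) and (\ref{HIT}) to $B$ rather than $A$, which is why passing to a worst-case subset $B$ (rather than arguing directly on $A$) is essential for matching the definition of $\DEN$.
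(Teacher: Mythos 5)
Your proposal is correct and follows essentially the same route as the paper's proof: both chains reduce the claim, via Lemma~\ref{SSS}, Lemma~\ref{HSSS}, and the bound (\ref{HIT}), to the inequality $\HS(B+\MAJ(B))\ge (|B|-1)/\MIMA(B)$ for a subset $B$ achieving (or, in your version, ranging over) the maximum in the definition of $\DEN$. The only cosmetic difference is that the paper applies the $\SETD\le\ETD+1$ step at the level of the pair $(B,h)$ while you apply it to $A$ at the end; this changes nothing of substance.
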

\begin{proof} By (\ref{mima}) and (\ref{den}) there is $B\subseteq A$ such that
\begin{eqnarray}\label{jsr}
\DEN(A)=\frac{|B|-1}{\MIMA(B)}=\frac{|B|-1}{\MAX(B+h)}
\end{eqnarray}
where $h=\MAJ(B)$. Then
\begin{eqnarray*}
\ETD(A)&\stackrel{(\ref{EBA})}{\ge}& \ETD(B)\stackrel{(\ref{defETD})}{\ge} \ETD(B,h) \\
&\stackrel{L\ref{SSS}}{\ge}& \SETD(B,h)-1\stackrel{L\ref{HSSS}}{=}\HS(B+ h)-1
\\
&\stackrel{(\ref{HIT})}{\ge}& \frac{|B|-1}{\MAX(B+ h)}-1\stackrel{(\ref{jsr})}{=}\DEN(A)-1. \qed
\end{eqnarray*}
\end{proof}

In Appendix A we also prove
\begin{lemma}~\label{zzz} We have
$$\ETD(A)\le \ln |A|\cdot\DEN(A)+1.$$
\end{lemma}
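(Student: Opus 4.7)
The plan is to fix an arbitrary $h\in B_m$ and to construct, via a greedy hitting-set procedure on $A+h$, a specifying set $S$ for $h$ with respect to $A$ of size at most $\ln|A|\cdot\DEN(A)+1$; since the bound will not depend on $h$, the lemma then follows by maximizing over $h$. Write $D:=\DEN(A)$ and $n:=|A|$. I would set $B_0:=A+h$ and, at step $t+1$, pick $j_{t+1}\in[m]$ maximizing $|\{b\in B_t:b_{j_{t+1}}=1\}|$, put $B_{t+1}:=\{b\in B_t:b_{j_{t+1}}=0\}$, and let $S_t:=\{j_1,\ldots,j_t\}$. The bijection $a\mapsto a+h$ identifies $\{a\in A:h_i=a_i\ \forall i\in S_t\}$ with $B_t$, so as soon as $|B_t|\le 1$ the set $S_t$ is a specifying set for $h$ with respect to $A$; it will be enough to show that this happens at some $t\le D\ln n+1$.

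The analysis rests on two monotonicity facts. First, $\min(|B_{j,0}|,|B_{j,1}|)$ is symmetric in $0$ and $1$, so $\MIMA$ and hence $\DEN$ are invariant under translation; in particular $\DEN(A+h)=D$, and since $B_t\subseteq A+h$, definition (\ref{den}) gives $\MIMA(B_t)\ge(|B_t|-1)/D$. Second, taking $h=0$ inside the minimum in (\ref{mima}) shows $\MIMA(B_t)\le\MAX(B_t)$. Together these imply that the greedy step removes at least $(|B_t|-1)/D$ elements, so
$$|B_{t+1}|-1\ \le\ (|B_t|-1)\bigl(1-\tfrac{1}{D}\bigr).$$
Iterating and using $1-1/D\le e^{-1/D}$ then yields $|B_t|-1\le(n-1)e^{-t/D}$, which is strictly less than $1$ for every integer $t>D\ln(n-1)$. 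Since $|B_t|-1$ is an integer, any such $t$ forces $|B_t|\le 1$, and the smallest such $t$ is at most $D\ln(n-1)+1\le D\ln n+1$.

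The only real subtlety is to notice that the $\ETD$ definition permits one element of $A$ to agree with $h$ on $S$, which is exactly what allows the recursion to terminate at $|B_t|\le 1$ rather than $|B_t|=0$, keeping the additive slack at $+1$ rather than $+2$. Apart from this, the argument is a direct instance of the standard greedy hitting-set analysis.
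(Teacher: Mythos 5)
Your proof is correct and runs on the same engine as the paper's: a greedy choice of the coordinate hitting the most remaining elements of $A+h$, with the geometric decay $|B_{t+1}|-1\le(|B_t|-1)\left(1-\frac{1}{\DEN(A)}\right)$ driven by the chain $\MAX(B)\ge\MIMA(B)\ge(|B|-1)/\DEN(A)$ for $B\subseteq A+h$. The one genuine difference is the framing. The paper first passes to the strong measure, $\ETD(A)\le\SETD(A)=\HS(A+h_0)$ via Lemmas~\ref{SSS} and~\ref{HSSS}, and then greedily builds a hitting set for $A+h_0$, whose termination condition is that the surviving set becomes $\emptyset$ or $\{0\}$. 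You instead build a specifying set for an arbitrary $h$ directly and stop as soon as at most one consistent element survives. Your route needs neither of those two lemmas, and because the specifying-set termination condition is the weaker one, the bookkeeping of the additive $+1$ is cleaner: in the hitting-set version a last surviving nonzero element can in principle cost one further coordinate, a corner case the paper's final display does not dwell on. Both arguments yield the stated bound; yours is marginally tighter and more self-contained.
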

It is also easy to see (by standard analysis using Chernoff Bound) that for a random uniform $A$, with positive probability, $\DEN(A)=O(1)$ and $\ETD(A)=\Theta(\log |A|)$. See the proof sketch in Appendix A. So the bound in Lemma~\ref{zzz} is asymptotically best possible.

\subsection{Upper Bounds}
Moshkov~\citep{M83,H95} proved the following upper bound. We gave the proof in the Appendix B for completeness.

\begin{lemma}~\citep{M83,H95} \label{LBo2} Let $A\subseteq \{0,1\}^m$ of size $n$. Then
$$\OPT(A)\le \ETD(A)+\frac{\ETD(A)}{\log \ETD(A)}\log n\le \frac{2\cdot \ETD(A)}{\log \ETD(A)}\log n.$$
\end{lemma}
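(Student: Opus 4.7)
I would prove the bound by induction on $n = |A|$, showing that one can build the decision tree in phases, where each phase is a short adaptive batch of queries that substantially shrinks the current candidate set. Write $d = \ETD(A)$ and note that for every $A' \subseteq A$, we have $\ETD(A') \le d$ by the monotonicity (\ref{EBA}). The target recurrence is $T(n) \le d + T(\lceil n/d \rceil)$, with base case $T(n) \le d$ for $n \le d$ (where one can afford to specify every remaining element by a batch of $d$ queries). Unrolling this recurrence gives $T(n) \le \lceil \log n/\log d \rceil \cdot d \le d + (d/\log d)\log n$, matching the claimed bound.

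The basic phase construction goes as follows. At the start of a phase with current candidate set $A'$, pick $h := \MAJ(A')$, so that for every $j \in [m]$, $|\{a \in A' : a_j \ne h_j\}| \le |A'|/2$. Let $S_h$ be a minimum specifying set for $h$ with respect to $A'$; by (\ref{EBA}) we have $|S_h| \le \ETD(A') \le d$. Adaptively query the indices of $S_h$ in any order. If every answer matches $h$ on $S_h$, the specifying-set property leaves at most one candidate and the phase finishes. Otherwise, at the first index $j \in S_h$ whose answer disagrees with $h_j$, the hidden element satisfies $a_j \ne h_j$, and since $h = \MAJ(A')$ at most $|A'|/2$ candidates remain. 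Either way this phase uses at most $d$ queries and halves $|A'|$, which already yields the suboptimal bound $\OPT(A) \le d \cdot \lceil \log n \rceil$.

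To sharpen this to the claimed $d + (d/\log d)\log n$, the phase must be improved so that $d$ queries reduce $|A'|$ by a factor of $d$ rather than $2$. The main obstacle is proving the existence of such an improved phase: the direct majority-and-specifying-set argument above stalls at factor $2$, because it exploits the combinatorial structure only at the single reference point $h = \MAJ(A')$. Moshkov's approach is essentially an exhaustive search over all adaptive depth-$d$ sub-trees rooted at the current state, picking the one whose worst-case leaf has minimum size; an extremal argument using $\ETD(A') \le d$ (roughly, if no depth-$d$ sub-tree achieves max leaf $\le |A'|/d$, one exhibits an $h \in B_m$ forcing $\ETD(A', h) > d$, a contradiction) certifies that the optimum over all such sub-trees does achieve max leaf size $\le \max(1, |A'|/d)$. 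This existence proof is non-constructive beyond the enumeration of depth-$d$ sub-trees, which is why the overall algorithm runs in exponential time even though its depth bound is polynomially small.
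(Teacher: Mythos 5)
Your phase skeleton (majority hypothesis, specifying set of size at most $d=\ETD(A)$ via the monotonicity (\ref{EBA}), stop at the first disagreement, factor-$2$ shrinkage, hence $\OPT(A)\le d\lceil\log n\rceil$) is correct and is indeed the first half of Moshkov's argument. But the entire content of the lemma is the sharpening from $d\log n$ to $(d/\log d)\log n$, and that is precisely the step you do not prove. Your proposed route requires the per-phase guarantee that \emph{some} depth-$d$ adaptive subtree always cuts every leaf down to $\max(1,|A'|/d)$ candidates, certified by an extremal argument you only describe as ``roughly'' exhibiting an $h$ with $\ETD(A',h)>d$ otherwise. No such argument is supplied, and the claim is strictly stronger than what the paper proves; it is not at all clear it holds. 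For instance, with $h=\MAJ(A')$ and a specifying set $\{i,j,k\}$ the union bound only forces $|A'_{i,\bar h_i}|+|A'_{j,\bar h_j}|+|A'_{k,\bar h_k}|\ge |A'|-1$ with each term at most $|A'|/2$, so a single cell such as $\{a\in A': a_i=h_i, a_j=h_j, a_k\ne h_k\}$ can have size close to $|A'|/2\gg |A'|/3$, and nothing in the hypothesis $\ETD(A')\le 3$ obviously excludes that \emph{every} depth-$3$ strategy has such a heavy leaf. Your description of Moshkov's algorithm as an enumeration of depth-$d$ subtrees is also inaccurate: the exponential running time comes solely from computing a minimum specifying set, and Theorem~\ref{main} shows everything else runs in time $O(T\log n+nm)$.

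The actual proof replaces your fixed per-phase target with a guarantee that scales with the number of queries the phase happens to use, plus an amortized accounting. Within a phase the specifying set $S$ for $h=\MAJ(\A^{(i,1)})$ is queried in a \emph{greedy order}: at each step one queries $y=\argmin_{z\in S}|\A^{(i,k)}_{(z,h_z)}|$, stopping at the first disagreement with $h$ (or when one candidate remains). If the phase used $k_i$ queries, then the $k_i$ disjoint sets of candidates eliminated at the successive steps each have size at least that of the surviving set $\A^{(i+1,1)}$ --- this is exactly what the argmin choice buys --- so $|\A^{(i+1,1)}|\le |\A^{(i,1)}|/\max(2,k_i)$ (the factor $2$ coming from the majority when $k_i\le 2$). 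Hence $\prod_i z_i\le n$ with $z_i=\max(2,k_i)$, i.e.\ $\sum_i\log z_i\le\log n$, and the total query count is
$$\sum_i k_i = k_t+\sum_{i<t}\log z_i\cdot\frac{k_i}{\log z_i}\le E+\frac{E}{\log E}\log n,$$
using $k_i\le E$ and $k_i/\log z_i\le E/\log E$. So the idea you are missing is not a stronger per-phase extremal fact but the coupling of each phase's reduction factor to its actual length, followed by this amortization; without it (or a genuine proof of your depth-$d$ subtree claim) the proposal only establishes the weaker bound $\ETD(A)\cdot\lceil\log n\rceil$.
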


In \citep{M83,H95}, Moshkov gave an example of a $n$-set $A_{E}\subseteq\{0,1\}^m$ with
$\ETD(A_{E})=E$ and $\OPT(A_{E})=\Omega((E/\log E)\log n)$. So the upper bound in the above lemma is the best possible.

\section{Polynomial Time Approximation Algorithm}

Given a a set $A\subseteq B_m$.
Can one construct an algorithm that finds
a hidden $a\in A$ with $\OPT(A)$ queries?
Obviously, with unlimited computational power this can be done
so the question is:
How close to $\OPT(A)$ can one get
when polynomial time $poly(m,n)$ is allowed for the construction?

An exponential time algorithm follows
from the following
$$\OPT(A)=\min_{i\in [m]} \max(\OPT(A_{i,0}),\OPT(A_{i,1}))$$
where $A_{i,\xi}=\{a\in A\ | \ a_i=\xi\}$. This algorithm runs in time at least $m!\ge (m/e)^m$.
See also~\citep{G72,AGMMPS93}.

Can one give a better exponential time algorithm? In what follows (Theorem~\ref{main}) we use  Moshkov~\citep{M83,H95} result (Lemma~\ref{LBo2}) to give a better exponential time approximation algorithm. In Appendix B we give another simple proof of the Moshkov~\citep{M83,H95} result that in practice uses less number of specifying sets. When the extended teaching dimension is constant, the algorithm is $O(1)$-approximation algorithm and runs in polynomial time.

\begin{theorem}\label{main} Let ${\cal A}$ be a class of sets $A\subseteq B_m$ of size $n$.
If there is an algorithm that for any $h\in B_m$ and any $A\in {\cal A}$
gives a specifying set for $h$ with respect to $A$
of size at most~$E$ in time $T$
then there is an algorithm that for any $A\in {\cal A}$ constructs a decision tree for $A$ of depth at most
$$E+\frac{E}{\log E}\log n\le E+\frac{E}{\log E}\OPT(A)$$ queries and runs in time $O(T\log n+ nm)$.
\end{theorem}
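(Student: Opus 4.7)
The plan is to realize Moshkov's construction (from the proof of Lemma~\ref{LBo2}) algorithmically, substituting the ``pick a specifying set of size $\le \ETD(A)$'' steps with calls to the hypothesized algorithm. This immediately inherits the depth bound $E + (E/\log E)\log n$, and the second inequality $E+(E/\log E)\log n \le E + (E/\log E)\OPT(A)$ follows from $\log n \le \OPT(A)$ in~(\ref{logli}).

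A compatibility check is in order. Moshkov's recursion works on residual candidate sets $A' \subseteq A$, while the oracle only accepts sets in ${\cal A}$. This is harmless, as any specifying set $S$ for $h$ with respect to $A$ remains a specifying set for $h$ with respect to every $A' \subseteq A$: the set of elements agreeing with $h$ on $S$ can only shrink when we restrict to $A'$, and it had size $\le 1$ to begin with. So we always invoke the oracle on $(A,h)$ for the $h$'s produced by Moshkov's strategy and reuse the returned set on the current subproblem.

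For the running time, I would rely on the phase structure of Moshkov's construction: along any root-to-leaf path there are $O(\log n)$ halving phases, each consuming one oracle call plus $\le E/\log E$ queries, with a terminal resolution of $\le E$ queries. Organizing the build so that all subproblems active at a given phase share a single specifying set (legal because the set depends on $(A,h)$, not on $A'$) reduces the total oracle invocations to $O(\log n)$, contributing $O(T\log n)$. The remaining bookkeeping -- storing each $A'$ and filtering it after every query -- amortizes to $O(nm)$, since the initial scan of $A$ costs $O(nm)$ and subsequent filters are monotone in the sense that each element of $A$ is tested against $O(m)$ queries before being eliminated or identified. Summing yields total time $O(T\log n + nm)$.

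The main obstacle is the per-phase halving step: from the shared $S$ of size $\le E$, efficiently exhibit $\le E/\log E$ queries that halve the current $|A'|$. A greedy rule -- pick the index in $S$ that most evenly splits the current subset, then recurse on the heavier part -- implements this deterministically in polynomial time, with the counting argument inside the proof of Lemma~\ref{LBo2} certifying that $E/\log E$ such picks suffice. The rest of the argument is essentially a re-reading of Moshkov's proof with algorithmic eyes, and the final depth/time accounting is then a matter of collecting the above bounds.
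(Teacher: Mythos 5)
Your overall route is the paper's: the intended proof of Theorem~\ref{main} is precisely ``run the algorithm of Lemma~\ref{LBo2} (Figure~\ref{FCC}), replacing the minimum specifying set by whatever the hypothesized oracle returns,'' and your two pieces of glue --- that a specifying set for $h$ with respect to a set remains one for every subset, and that the second inequality is just $\log n\le\OPT(A)$ from (\ref{logli}) --- are exactly what is needed and are correct.

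However, two of the mechanisms you substitute for Moshkov's would fail as stated. First, specifying sets cannot be shared across all nodes of a given phase: the oracle is invoked on $(A',h)$ with $h=\mathrm{Majority}(A')$, and $h$ \emph{does} depend on the residual set $A'$, so different branches require different calls. This is harmless only because the $O(T\log n+nm)$ bound is meant for the adaptive execution of a single root-to-leaf path, where one oracle call per phase suffices; it is not an accounting for materializing the whole tree. Second, and more seriously, your per-phase rule --- pick the index of $S$ that most evenly splits the current subset, with a budget of $E/\log E$ picks per phase --- does not halve the candidate set: it can happen that every index of $S$ splits $A'$ as $1$ versus $|A'|-1$, in which case $E/\log E$ such picks remove only $E/\log E$ elements. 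Moshkov's rule is the \emph{opposite} greedy, $y=\argmin_{z\in S}|\{a\in A'\,:\,a_z=h_z\}|$, continued until either an answer disagrees with $h$ (whereupon the set halves, because $h$ is the bitwise majority) or $S$ is exhausted and the specifying-set property leaves at most one candidate. Moreover $E/\log E$ is an amortized quantity, not a per-phase cap: a phase may consume up to $E$ queries, but then it shrinks the set by a factor of $k_i$, and the bound $\sum_i k_i\le E+(E/\log E)\log n$ follows from $\sum_i\log\max(2,k_i)\le\log n$. The counting argument of Lemma~\ref{LBo2} certifies the $\argmin$ rule, not yours; since you defer to that argument anyway, the repair is simply to run the Figure~\ref{FCC} algorithm verbatim with the oracle's set in Step~4.
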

\begin{proof} Follows immediately from Moshkov algorithm~\citep{M83,H95}. See Appendix~B.\qed
\end{proof}

The following result immediately follows from Theorem~\ref{main}.

\begin{theorem}\label{T1} Let $A\subseteq B_m$ be a $n$-set. There is an
algorithm that finds the hidden column in time
$${m\choose \ETD(A)}\cdot\ETD(A)\cdot n\log n$$ and asks at most
$$\frac{2\cdot \ETD(A)\cdot\log n}{\log \ETD(A)}
\le \frac{2\cdot \min(\ETD(A),\log n)}{\log \ETD(A)}\OPT(A)$$ queries.

In particular, if $\ETD(A)$ is constant then
the algorithm is $O(1)$-approximation algorithm that
runs in polynomial time.
\end{theorem}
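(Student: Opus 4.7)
The plan is to derive Theorem~\ref{T1} as an instance of Theorem~\ref{main} by supplying the missing ingredient: a brute-force subroutine that, for any $h\in B_m$, produces a specifying set of size at most $\ETD(A)$.

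First, setting $E:=\ETD(A)$, I would describe the subroutine explicitly. For a given $h$, iterate over all subsets $S\subseteq [m]$ with $|S|\le E$ in increasing order of size; for each one, compute the count $|\{a\in A\,:\,(\forall i\in S)\ h_i=a_i\}|$, and stop as soon as this count is at most $1$. By the definition of $\ETD(A)$ in~(\ref{defETD}), some $S$ of size at most $E$ works, so the search is guaranteed to terminate with a valid specifying set. Each candidate $S$ can be tested in $O(|S|\cdot n)=O(E\cdot n)$ time and the number of candidates is at most $\binom{m}{E}$, so the time per call is $T=O\bigl(\binom{m}{E}\cdot E\cdot n\bigr)$.

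Next I would feed this subroutine, together with the value $E=\ETD(A)$, into Theorem~\ref{main}. That theorem immediately produces a decision tree for $A$ of depth at most $E+\tfrac{E}{\log E}\log n\le \tfrac{2E\log n}{\log E}$, constructed in time $O(T\log n+nm)$, which is dominated by $\binom{m}{\ETD(A)}\cdot \ETD(A)\cdot n\log n$ (the additive $nm$ being absorbed). This already matches the running-time and first-form query bound asserted in the theorem.

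Finally, to obtain the second bound on the number of queries, I would invoke Lemma~\ref{LBo1}, which gives $\OPT(A)\ge \max(\ETD(A),\log n)$. Using the identity $\ETD(A)\cdot\log n=\min(\ETD(A),\log n)\cdot \max(\ETD(A),\log n)$ and replacing the $\max$ factor by $\OPT(A)$ yields
$$\frac{2\cdot \ETD(A)\cdot \log n}{\log \ETD(A)}\le \frac{2\cdot \min(\ETD(A),\log n)}{\log \ETD(A)}\cdot\OPT(A),$$
as required. The ``constant-$\ETD$'' corollary then falls out by plugging $\ETD(A)=O(1)$ into both the time and the approximation ratio.

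Because the theorem is essentially a packaging of Theorem~\ref{main}, there is no real obstacle here; the only delicate points are confirming that the enumeration of $\binom{m}{E}$ subsets indeed dominates the additive $nm$ cost in the time bound from Theorem~\ref{main}, and recognising the $\min/\max$ trick that converts the clean bound $(2E/\log E)\log n$ into the approximation guarantee stated in terms of $\OPT(A)$.
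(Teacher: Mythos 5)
Your proposal is correct and follows essentially the same route as the paper: exhaustively search the subsets of $[m]$ of size at most $\ETD(A)$ to obtain a specifying set for any $h$, plug this subroutine into Theorem~\ref{main}, and convert the resulting depth bound into an approximation ratio via Lemma~\ref{LBo1}. The paper's own (much terser) proof is exactly this argument.
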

\begin{proof} To find a specifying set for $h$ with respect to $A$ we exhaustively check each $\ETD(A)$ row of $A$. Each check takes time $n$. Since the algorithm asks at most $\ETD(A)\cdot \log n$ queries, the time complexity is as stated in the Theorem.
\end{proof}

Can one do it in $poly(m,n)$ time?
Hyafil and Rivest, \citep{HR76}, show that the problem of finding $\OPT$ is NP-Complete.
The reduction of Laber and Nogueira, \citep{LN04}, of set cover to this problem with
the inapproximability result of Dinur and Steurer~\citep{DS14} for set cover
implies that it cannot be approximated to $(1-o(1))\cdot \ln n$
unless P=NP.

In \citep{AMMRS98}, Arkin et al. showed that (the AMMRS-algorithm) if at the $i$th query the algorithm chooses
an index $j$ that partitions the current node set (the elements in $A$ that are consistent with the answers until this node)
$A$ as evenly as possible, that is, that maximizes
$\min(|\{a\in A|a_j=0\}|,|\{a\in A|a_j=1\}|)$,
then the query complexity is within a factor of $\lceil \log n \rceil$ from optimal. The
AMMRS-algorithm, \citep{AMMRS98}, runs in time $poly(m,n)$.
Moshkov~\citep{AMMRS98,M04} analysis shows that this algorithm
is $\ln n$-approximation algorithm and therefore is optimal.
In this section we will give a simple proof.

In \citep{M83,H95}, Moshkov gave a simple $\ETD(A)$-approximation algorithm (Algorithm MEMB-HALVING-1 in~\citep{H95}). He then gave another algorithm that achieves the query complexity in Lemma~\ref{LBo2} (Algorithm MEMB-HALVING-2 in~\citep{H95}). This is within a factor of
$$\frac{2\cdot \min(\ETD(A),\log n)}{\log \ETD(A)}$$ from optimal.
This is better than the ratio $\ln n$, but, unfortunately,
both algorithms require finding a minimum size specifying
set and the problem of finding a minimum size specifying set for $h$ is
NP-Hard, \citep{SM91,ABCS92,GK95}.

Can one achieve a $O(\ETD(A))$-approximation. In the following we give a surprising result. We show that the AMMRS-algorithm is $(\ln 2) \ETD(A)$-approximation algorithm. We also show that no better ratio can be achieved unless P=NP.

\begin{theorem}\label{T2} The AMMRS-algorithm runs in time $O(mn)$
and finds the hidden element $a\in A$ with at most
\begin{eqnarray*}
\DEN(A)\cdot \ln(n) &\le& \min((\ln 2)\DEN(A),\ln n)\cdot \OPT(A)\\
&\le& \min((\ln 2)(\ETD(A)+1),\ln n )\cdot \OPT(A)
\end{eqnarray*}
queries.
\end{theorem}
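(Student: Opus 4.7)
The plan is to analyze the shrinkage of the candidate set under the AMMRS greedy rule and then convert the resulting depth bound into the stated inequalities via earlier lemmas in the paper.

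The heart of the argument is a one-step contraction. If the candidate set at the current node is $B\subseteq A$, then after the next AMMRS query the larger child $B'$ satisfies $|B'|-1 \le (1-1/\DEN(A))(|B|-1)$. To see this, plug $C=B$ into the definition $\DEN(A)=\max_{C\subseteq A}(|C|-1)/\MIMA(C)$, obtaining $\MIMA(B)\ge (|B|-1)/\DEN(A)$. Since AMMRS picks an index $j$ attaining $\MIMA(B)=\min(|B_{j,0}|,|B_{j,1}|)$, the worse child has size $|B|-\MIMA(B)\le |B|-(|B|-1)/\DEN(A)$, and the claimed contraction falls out by rearrangement.

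Writing $Q:=\DEN(A)$ and iterating along the longest root-to-leaf path of the decision tree the algorithm builds, the candidate size at depth $k$ satisfies $N_k-1\le (1-1/Q)^k(n-1)\le e^{-k/Q}(n-1)$. Choosing $k=\lceil Q\ln n\rceil$ drives the right-hand side strictly below $1$, forcing $N_k=1$; the hidden element is thus identified within $\DEN(A)\ln n$ queries, which is the leftmost quantity in the theorem. The $O(mn)$ running-time claim is the standard observation that at each step the winning index is chosen in time $O(m|B|)$ and $|B|$ decreases strictly along the online execution path.

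For the middle inequality I invoke two lower bounds on $\OPT(A)$: equation (\ref{logli}) yields $\OPT(A)\ge \log n=\ln n/\ln 2$, so $\ln n\le(\ln 2)\OPT(A)$ and hence $\DEN(A)\ln n\le (\ln 2)\DEN(A)\cdot\OPT(A)$; Lemma~\ref{FLB} yields $\OPT(A)\ge \DEN(A)$, so $\DEN(A)\ln n\le \ln n\cdot\OPT(A)$. Taking the tighter of the two gives $\DEN(A)\ln n\le \min((\ln 2)\DEN(A),\ln n)\cdot\OPT(A)$. The rightmost inequality is then immediate from Lemma~\ref{SLB}, which provides $\DEN(A)\le \ETD(A)+1$ and therefore $(\ln 2)\DEN(A)\le (\ln 2)(\ETD(A)+1)$. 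The only non-bookkeeping step is the one-step contraction, and it really is a direct consequence of the definition of $\DEN$; the main thing to be careful about is remembering that the maximum in $\DEN(A)$ ranges over all subsets of $A$, so applying it to the current candidate set is legitimate.
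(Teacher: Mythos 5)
Your proposal is correct and takes essentially the same route as the paper's proof: the identical one-step contraction $|B'|-1\le (1-1/\DEN(A))(|B|-1)$, obtained by applying the definition of $\DEN(A)$ to the current candidate set and using that AMMRS attains $\MIMA(B)$, iterated to give depth at most $\DEN(A)\ln n$, with the stated ratios then following from Lemma~\ref{FLB}, Lemma~\ref{SLB} and the information-theoretic bound in (\ref{logli}). The only (minor) point is that your running-time remark, like the paper's, does not fully justify $O(mn)$, but the paper's own proof omits this as well.
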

\begin{proof} Let $B$ be any subset of $A$. Then,
$$\DEN(B)\stackrel{(\ref{den})}{\ge}\frac{|B|-1}{\MIMA(B)}$$ and therefore
$$\MIMA(B)\ge \frac{|B|-1}{\DEN(B)}  \ge \frac{|B|-1}{\DEN(A)}.$$

Since the AMMRS-algorithm chooses at each node in the decision tree the index $j$ that maximizes $\min(|B_{j,0}|,|B_{j,1}|)$ where $B_{j,\xi}=\{a\in B|a_j=\xi\}$ and $B$ is the set of elements in $A$ that are consistent with the answers until this node, we have
\begin{eqnarray*}
\max(|B_{j,0}|,|B_{j,1}|)-1&=& |B|-1-\min(|B_{j,0}|,|B_{j,1}|)\\
&\stackrel{(\ref{mima2})}{=}& |B|-1-\MIMA(B)\le (|B|-1)\left(1-\frac{1}{\DEN(A)}\right).
\end{eqnarray*}
Therefore, for a node $v$ of depth $h$ in the decision tree, the set $B(v)$ of elements in $A$ that are consistent with the answers until this node contains at most
$$(|A|-1)\left(1-\frac{1}{\DEN(A)}\right)^h+1$$ elements. Therefore the depth of the tree is at most
$$\DEN(A)\ln |A|.\qed$$
\end{proof}

We now show that our approximation algorithm is optimal

\begin{theorem} Let $\epsilon$ be any constant. There is no polynomial time algorithm that finds the hidden element with less than $(1-\epsilon)\DEN(A)\cdot \ln|A|$ unless P=NP.
\end{theorem}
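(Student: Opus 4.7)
The plan is to argue by reduction from set cover, combining three ingredients already cited or proved in the paper: the Laber--Nogueira reduction of set cover to the minimum height decision tree problem, the Dinur--Steurer $(1-o(1))\ln N$ inapproximability for set cover, and the bound $\DEN(A)\le \ETD(A)+1 \le \OPT(A)+1$ (which is Lemma~\ref{SLB} together with Lemma~\ref{LBo1}).

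Concretely, suppose toward a contradiction that there is a polynomial time algorithm $\mathcal{B}$ which, on every input $A\subseteq B_m$, outputs a decision tree for $A$ of depth at most $(1-\epsilon)\DEN(A)\ln |A|$. Since $\DEN(A)\le \OPT(A)+1$, the depth of the tree produced by $\mathcal{B}$ is at most
\[
(1-\epsilon)(\OPT(A)+1)\ln|A| \;=\; (1-\epsilon)\left(1+\tfrac{1}{\OPT(A)}\right)\OPT(A)\ln|A|.
\]
For every fixed $\epsilon>0$, whenever $\OPT(A)\ge 2/\epsilon$ we have $(1-\epsilon)(1+1/\OPT(A))\le 1-\epsilon/2$, so $\mathcal{B}$ is a polynomial time $(1-\epsilon/2)\ln|A|$-approximation for $\OPT(A)$ on every instance of sufficiently large optimum.

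I then feed $\mathcal{B}$ instances $A=A(I)$ produced by the Laber--Nogueira reduction from set cover instances $I=(U,\mathcal{S})$. That reduction is approximation preserving: it produces an instance of size polynomial in $|I|$ with $\ln |A(I)|=(1+o(1))\ln |U|$ and with an explicit correspondence between decision trees for $A(I)$ and set covers of $U$ that maps the depth of the tree to the size of the cover up to a $(1+o(1))$ factor, so that a $\rho$-approximation of $\OPT(A(I))$ yields a $(1+o(1))\rho$-approximation of $\OPT_{\rm SC}(I)$. By scaling $I$ (for instance, taking disjoint copies of the instance, which multiplies both the optimum cover size and $\ln|U|$), I may assume $\OPT(A(I))$ is as large as needed to apply the previous paragraph. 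Running $\mathcal{B}$ therefore gives, in polynomial time, a set cover of size at most $(1-\epsilon/3)\ln|U|\cdot \OPT_{\rm SC}(I)$.

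The main step in finishing is invoking Dinur--Steurer: for every constant $\delta>0$, no polynomial time algorithm approximates set cover within $(1-\delta)\ln |U|$ unless P=NP. Taking $\delta=\epsilon/3$ yields the contradiction, proving the theorem. The only non-routine point is checking the quantitative behaviour of the Laber--Nogueira reduction under padding, so that $\OPT(A)$ can be driven above $2/\epsilon$ while $\ln |A|$ still tracks $\ln|U|$ up to a $(1+o(1))$ factor and the approximation ratio is transferred with only a $(1+o(1))$ loss; this is where I expect the bookkeeping to be the main obstacle, but it is exactly what the reduction is designed to give, so no new idea is required beyond an inspection of its parameters.
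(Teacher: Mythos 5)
Your overall strategy is exactly the paper's: conclude that such an algorithm would be a $(1-\Omega(\epsilon))\ln|A|$-approximation for $\OPT(A)$ and then invoke the Laber--Nogueira reduction from set cover together with the Dinur--Steurer inapproximability. The one substantive difference is the lemma you use to compare $\DEN(A)$ with $\OPT(A)$: you route through $\DEN(A)\le \ETD(A)+1\le \OPT(A)+1$ (Lemma~\ref{SLB} plus Lemma~\ref{LBo1}), whereas the paper uses Lemma~\ref{FLB}, which gives $\OPT(A)\ge \DEN(A)$ directly with no additive loss. With Lemma~\ref{FLB} the whole argument is one line,
$(1-\epsilon)\DEN(A)\ln|A|\le (1-\epsilon)\,\OPT(A)\ln|A|$,
and none of your machinery for absorbing the $+1$ (the condition $\OPT(A)\ge 2/\epsilon$, the padding of the set cover instance) is needed. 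So your proof is not wrong in spirit, but it is doing avoidable work, and the avoidable part contains the only shaky step: your claim that taking $k$ disjoint copies of a set cover instance ``multiplies both the optimum cover size and $\ln|U|$'' is not right as stated --- it multiplies $|U|$ by $k$ and hence only adds $\ln k$ to $\ln|U|$, while multiplying the optimum by $k$. That particular bookkeeping can be repaired (the hard Dinur--Steurer instances already have superconstant optimum, so in fact no padding is needed at all), but the cleanest fix is simply to replace your chain of inequalities by the paper's Lemma~\ref{FLB}, $\OPT(A)\ge\DEN(A)$, which you appear to have overlooked.
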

\begin{proof} Suppose such an algorithm exists. Then
$$(1-\epsilon)\DEN(A)\ln |A|\stackrel{L\ref{FLB}}\le (1-\epsilon)\ln |A| \OPT(A).$$
That is, the algorithm is also $(1-\epsilon)\ln |A|$-approximation algorithm.
Laber and Nogueira, \citep{LN04} gave a polynomial time algorithm reduction of minimum depth decision tree to set cover and Dinur and Steurer~\citep{DS14} show that there is no polynomial time $(1-o(1))\cdot \ln |A|$ for set cover unless P=NP.
Therefore, such an algorithm implies P=NP.\qed
\end{proof}

\section{Applications to Disjunction of Predicates}
In this section we apply the above results to learning the class of disjunctions of predicates from a set of predicates $\FF$ from membership queries~\citep{BDVY17}.

Let $C=\{f_1,\ldots,f_n\}$ be a set of boolean functions $f_i:X\to \{0,1\}$ where $X=\{x_1,\ldots,x_m\}$. Let $A_C=\{(f_i(x_1),\ldots,f_i(x_m))\ |\ i=1,\ldots,n\}$.
We will write $\OPT(A_C), \ETD(A_C),$ etc. as $\OPT(C), \ETD(C),$ etc.

\DanaFD{Let $\FF$ be}{Given} a set of boolean functions (predicates) over a domain $X$. We consider the class of functions $\FF_\vee:=\{\vee_{f\in S}f\ |\ S\subseteq \FF\}$.
\subsection{\DanaFD{An Equivalence Relation Over $\FF_\vee$}{Partial Order \DanaF{over}of $\FF_\vee$}}\label{sec:defs}
In this section, we present an equivalence relation over $\FF_\vee$ and define the representatives of the equivalence classes. This enables us in later sections to focus on the representative elements from $\FF_\vee$.
Let $\FF$ be a set of boolean functions over the domain $X$. \DanaFD{The}{Define the} equivalence relation $=$ over $\FF_\vee$ \DanaFD{is defined as follows:}{where} two disjunctions $F_1,F_2\in \FF_\vee$ are equivalent ($F_1= F_2$) if $F_1$ is logically equal to $F_2$. In other words, they represent the same function \DanaFD{(from $X$ to $\{0,1\}$)}{$X\to \{0,1\}$}. We \DanaFD{}{will} write $F_1\equiv F_2$ to denote that $F_1$ and $F_2$ are identical\DanaFD{; that}{. That} is, they have the same representation. For example, consider $f_1,f_2:\{0,1\}\to \{0,1\}$ where $f_1(x)=1$ and $f_2(x)=x$. Then, $f_1\vee f_2 = f_1$ but $f_1 \vee f_2 \not\equiv f_1$.

We denote by $\FF_\vee^*$ the set of equivalence classes of $=$ and write each equivalence class as $[F]$, where $F\in\FF_\vee$. Notice that if $[F_1]=[F_2]$, then $[F_1\vee F_2]=[F_1]=[F_2]$. Therefore, for every $[F]$, we can choose the {\it representative element} to be $G_F:=\vee_{F'\in S}F'$ where $S\subseteq \FF$ is the maximum size set that satisfies $\vee S:=\vee_{f\in S}f=F$. We denote by $G(\FF_\vee)$ the set of all representative elements. Accordingly, $G(\FF_\vee)=\{G_F\ |\ F\in\FF_\vee\}$.
\DanaFD{As an example, consider}{Before we proceed we give an example. Consider} the set $\FF$ consisting of four functions $f_{11},f_{12},f_{21},f_{22}:\{1,2\}^2\to \{0,1\}$ where $f_{ij}(x_1,x_2)=[x_i\ge j]$ where $[x_i\ge j]=1$ if $x_i\ge j$ and $0$ otherwise. There are $2^4=16$ elements in $\Ray^2_2:=\FF_\vee$ and five representative functions in $G(\FF_\vee)$: $G(\FF_\vee)=\{f_{11}\vee f_{12}\vee f_{21}\vee f_{22}$, $f_{12}\vee f_{22}$, $f_{12}$, $f_{22},0\}$ (where $0$ is the zero function).
% See the example in Figure~\ref{HasseRay22}.

\subsection{A Partial Order Over $\FF_\vee$ and Hasse Diagram}\label{sec:32}
In this section, we define a partial order over $\FF_\vee$ and present related definitions.
The partial order, denoted by $\imply$, is defined as follows:
$F_1\imply F_2$ if $F_1$ logically implies~$F_2$.
Consider the Hasse diagram $H(\FF_\vee)$ of $G(\FF_\vee)$ for this partial order. The maximum (top) element in the diagram is $G_{\max}:=~\vee_{f\in \FF}f$. The minimum (bottom) element is $G_{\min}:=\vee_{f\in \O}f$, i.e., the zero function. Figures~\ref{HasseClause32} and \ref{RAY23E} shows an illustration of the Hasse diagram.

In a Hasse diagram, $G_1$ is a {\it descendant} (resp., {\it ascendent}) of $G_2$ if there is a (nonempty) downward path from $G_2$ to $G_1$ (resp., from $G_1$ to $G_2$), i.e., $G_1\imply G_2$ (resp., $G_2\imply G_1$) and $G_1\not=G_2$. $G_1$ is an {\it immediate descendant} of $G_2$ in $H(\FF_\vee)$ if $G_1\imply G_2$, $G_1\not=G_2$ and there is no $G\in G(\FF_\vee)$ such that $G\not= G_1$, $G\not=G_2$ and $G_1\imply G\imply G_2$. $G_1$ is an {\it immediate ascendant} of $G_2$ if $G_2$ is an immediate descendant of $G_1$.

%The \emph{neighbors} of $G$ are its immediate ascendants and descendants.
We denote by $\De(G)$ and $\As(G)$ the sets of all the immediate descendants and immediate ascendants of \DanaFD{}{and neighbours of}$G$, respectively. The {\it neighbours set} of $G$ is
$\Ne(G)=\De(G)\cup \As(G)$. We further denote by $\DE(G)$ and $\AS(G)$ the sets of all $G$'s descendants and ascendants, respectively.
\begin{definition} The {\it degree} of $G$ is $\deg(G)=|\Ne(G)|$ and the degree $\deg(\FF_\vee)$ of $\FF_\vee$ is $\max_{G\in G(\FF_\vee)}\deg (G)$.
\end{definition}

For $G_1$ and $G_2$, we define their {\it lowest common ascendent} (resp., greatest common descendant) $G=\lca(G_1,G_2)$ (resp., $G=\gcd(G_1,G_2)$) to be the minimum (resp., maximum) element in $\AS(G_1)\cap \AS(G_2)$ (resp., $\DE(G_1)\cap \DE(G_2)$).
\DanaFD{}{Therefore, we can show:}

The following result is from~\citep{BDVY17}
\begin{lemma}\label{lca} Let $G_1,G_2\in G(\FF_\vee)$. Then, $\lca(G_1,G_2)=G_1\vee G_2$.

In particular, if $G_1,G_2$ are two distinct immediate descendants of $G$, then $G_1\vee G_2=G$.
\end{lemma}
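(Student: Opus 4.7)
The plan is to identify $G_1\vee G_2$ with its representative element $G_{G_1\vee G_2}\in G(\FF_\vee)$ and then show it is the minimum element, under the partial order $\imply$, among the common ascendants of $G_1$ and $G_2$. Concretely, I would treat $\AS(\cdot)$ together with the node itself as the set of ``upper bounds'' in the Hasse diagram (which is how $\lca$ is unambiguously defined even when one of $G_1,G_2$ implies the other) and verify the two halves of a standard join argument.

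First, I would show that $G_1\vee G_2$ is a common ascendant. By propositional tautology $G_1\imply G_1\vee G_2$ and $G_2\imply G_1\vee G_2$, so the representative $G_{G_1\vee G_2}$ sits at or above both $G_1$ and $G_2$ in $H(\FF_\vee)$. For minimality, let $H\in G(\FF_\vee)$ be any other common ascendant, so $G_1\imply H$ and $G_2\imply H$. The elementary join property of disjunction ``if $A\imply H$ and $B\imply H$ then $A\vee B\imply H$'' gives $G_1\vee G_2\imply H$, i.e., $H$ lies at or above $G_1\vee G_2$ in the Hasse diagram. Combining these two points yields $\lca(G_1,G_2)=G_1\vee G_2$.

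For the ``in particular'' clause, assume $G_1$ and $G_2$ are distinct immediate descendants of $G$. Then $G_1\imply G$ and $G_2\imply G$, so by what was just proved $G_1\vee G_2\imply G$; and $G_1\imply G_1\vee G_2$ by the tautology above. Hence $G_1\vee G_2$ sits in the interval $[G_1,G]$ of $H(\FF_\vee)$. Because $G_1$ is an immediate descendant of $G$, that interval contains only $G_1$ and $G$, so $G_1\vee G_2\in\{G_1,G\}$. The case $G_1\vee G_2=G_1$ would force $G_2\imply G_1$; combined with $G_1\imply G$ (and $G_1\neq G$, $G_2\neq G_1$) this would make $G_1$ a strict intermediate between $G_2$ and $G$, contradicting the assumption that $G_2$ is an immediate descendant of $G$. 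Hence $G_1\vee G_2=G$. The only subtlety is bookkeeping between $\FF_\vee$ and $G(\FF_\vee)$: the expression $G_1\vee G_2$ should always be read as the representative $G_{G_1\vee G_2}$, and since $\imply$ respects the equivalence $=$, this identification is harmless throughout.
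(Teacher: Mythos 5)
Your proof is correct. Note that the paper itself does not prove this lemma --- it is imported verbatim from \citep{BDVY17} with no argument given --- so there is no in-paper proof to compare against; your write-up supplies the missing justification. The argument you give is the natural one: $G_1\vee G_2$ is an upper bound of $G_1$ and $G_2$ by tautology, and the join property ($A\imply H$ and $B\imply H$ imply $A\vee B\imply H$) makes it the least one, so it is the minimum of the common ascendants; the ``in particular'' clause then follows from the interval $[G_1,G]$ containing no intermediate element of $G(\FF_\vee)$ and from ruling out $G_1\vee G_2=G_1$ via the immediacy of $G_2$. You also correctly flag the two points a careless reading would trip over: (i) $G_1\vee G_2$ must be read as its representative $G_{G_1\vee G_2}\in G(\FF_\vee)$, which is harmless because $\imply$ is well defined on equivalence classes, and (ii) the paper's $\AS(\cdot)$ is the set of \emph{strict} ascendants, so for comparable $G_1,G_2$ the statement only makes sense if the node itself is admitted as a common upper bound (or the lemma is read as restricted to incomparable pairs); for the ``in particular'' application $G_1\neq G_2$ are incomparable immediate descendants, so nothing is lost. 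No gaps.
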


\subsection{Witnesses}
In this subsection we define the term \emph{witness}.
Let $G_1$ and $G_2$ be elements in $G(\FF_\vee)$. An element \DanaFD{$a\in X$}{$a$} is a {\it witness} for $G_1$ and $G_2$ if $G_1(a)\not= G_2(a)$.

For a class of boolean functions $C$ over a domain $X$ and a function $G\in C$ we say that a set of elements $W\subseteq X$ is a {\it witness set} for $G$ in $C$ if for every $G'\in C$ and $G'\not=G$ there is a witness in $W$ for $G$ and $G'$.

\subsection{The Extended Teaching Dimension of $\FF_\vee$}
In this section we prove
\begin{lemma}\label{etdf} For every $h:X\to \{0,1\}$ if $h\nRightarrow G_{\max}$ then $\ETD(\FF_\vee,h)=1$. Otherwise, there is $G\in G(\FF_\vee)$ such that
$$\ETD(\FF_\vee,h)\le |\De(G)|+\HS(\As(G)\wedge \bar{G})\le |\Ne(G)|=\deg(G)$$
where $As(G)\wedge \bar{G}=\{s\wedge \bar{G}\ |\ s\in \As(G)\}$.
In particular,
$$\ETD(\FF_\vee)\le \max_{G\in G(\FF_\vee)} \left(|\De(G)|+\HS(\As(G)\wedge \bar{G})\right)\le \deg(\FF_\vee).$$
\end{lemma}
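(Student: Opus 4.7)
The plan is to exhibit, in each case of the statement, an explicit specifying set for $h$ whose size matches the claimed bound. For the easy case $h \nRightarrow G_{\max}$, I would pick any $a \in X$ with $h(a) = 1$ and $G_{\max}(a) = 0$; since every $G \in G(\FF_\vee)$ satisfies $G \imply G_{\max}$, we get $G(a) \le G_{\max}(a) = 0 \neq h(a)$, so no element of $G(\FF_\vee)$ is consistent with $h$ on $\{a\}$. Hence $\{a\}$ alone is a specifying set, giving $\ETD(\FF_\vee, h) \le 1$.

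For the main case $h \imply G_{\max}$, I would choose $G$ to be any \emph{minimal} element of the set $\{G' \in G(\FF_\vee) : h \imply G'\}$; this set is nonempty (it contains $G_{\max}$) and lives in a finite poset, so a minimal element exists. Build $S = T \cup S'$ as follows: for each $G_i \in \De(G)$, minimality of $G$ forces $h \nRightarrow G_i$, so one can pick $a_i \in X$ with $h(a_i) = 1$ and $G_i(a_i) = 0$, and collect these into $T$; then take $S'$ to be a minimum hitting set for the family $\{s \wedge \bar G : s \in \As(G)\}$, so that $|S'| = \HS(\As(G) \wedge \bar G)$. Already this gives $|S| \le |\De(G)| + \HS(\As(G) \wedge \bar G)$.

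To verify that $S$ specifies $h$, I would eliminate every $G' \ne G$ in $G(\FF_\vee)$ by splitting on whether $G' \imply G$. If $G' \imply G$, then $G' \imply G_i$ for some $G_i \in \De(G)$, whence $G'(a_i) \le G_i(a_i) = 0 \neq 1 = h(a_i)$. Otherwise $G' \nRightarrow G$, so $G \vee G' > G$ and $G \vee G' \in G(\FF_\vee)$ by Lemma~\ref{lca}; picking any minimal element of $\{G'' : G < G'' \le G \vee G'\}$ yields an $s \in \As(G)$ with $s \imply G \vee G'$, and the hitting property of $S'$ then supplies $b \in S'$ with $s(b) = 1$ and $G(b) = 0$. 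The key observation is that $h \imply G$ forces $h(b) = 0$, while $s \imply G \vee G'$ together with $G(b) = 0$ forces $G'(b) = 1$; hence $G'(b) \ne h(b)$, eliminating $G'$. The trivial bound $\HS(\As(G) \wedge \bar G) \le |\As(G)|$ (one element per set, since each $s > G$ makes $s \wedge \bar G$ nonempty) then gives $|\De(G)| + \HS(\As(G) \wedge \bar G) \le |\De(G)| + |\As(G)| = |\Ne(G)| = \deg(G)$, and maximizing over $h$ (hence over the resulting $G$) yields the ``in particular'' global bound.

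The main subtlety is the ascendant subcase: the hitting set $\HS(\As(G) \wedge \bar G)$ is defined in terms of $G$ alone, with no reference to $h$, so to reuse its elements as witnesses separating $G'$ from $h$ one needs the implication $G(b) = 0 \Rightarrow h(b) = 0$. This is exactly what choosing $G$ minimal with $h \imply G$ guarantees, and it is the reason the choice must go in this direction rather than, say, a maximal $G$ with $G \imply h$ (which would give $G \le h$ and leave $h(b)$ free when $G(b) = 0$, breaking the witness argument and forcing one to hit the strictly smaller family $\As(G) \wedge \bar h$ instead).
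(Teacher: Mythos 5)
Your proposal is correct and follows essentially the same route as the paper: the same choice of $G$ as a minimal element of $G(\FF_\vee)$ implied by $h$, one witness per immediate descendant, and a hitting set for $\As(G)\wedge\bar G$ whose elements separate $h$ from everything that does not imply $G$ via $\lca(G,G')=G\vee G'$. The only cosmetic difference is that you treat ascendants and incomparable elements in one unified subcase, whereas the paper handles immediate ascendants first and then reduces incomparable elements to that case; the substance is identical.
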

\begin{proof} Let $h:X\to \{0,1\}$ be any function. If $h\nRightarrow G_{\max}$ then there is an assignment $a$ that satisfies $h(a)=1$ and $G_{\max}(a)=0$. Since for all $G\in G(\FF_\vee)$, $G\Rightarrow G_{\max}$ we have $G(a)=0$. Therefore, the set $\{a\}$ is a specifying set for $h$ with respect to $\FF_\vee$ and $\ETD(\FF_\vee,h)=1$.

Let $h\Rightarrow G_{\max}$. Consider any $G\in G(\FF_\vee)$ such that $h\imply G$ and for every immediate descendant $G'$ of $G$ we have $h\nRightarrow G'$. Now for every immediate descendent $G'$ of $G$ find an assignment $a$ such that $G'(a)=0$ and $h(a)=1$. Then $a$ is a witness for $h$ and $G'$. Therefore, $a$ is also a witness for $h$ and every descendant of $G'$. Let $A$ be the set of all such assignments, i.e., for every descendant of $G$ one witness. Then $|A|\le |\De(G)|$ and $A$ is a witness set for $h$ and all the descendants of $G$.
We note here that if
$h=0$ then $G=G_{\min}$ which has no immediate descendants and then $A=\O$.

Consider a hitting set $B$ for $As(G)\wedge \bar{G}$ of size $\HS(As(G)\wedge \bar{G})$.
Now for every immediate ascendant $G''$ of $G$ find an assignment $b\in B$ such that $G''(b)\wedge \bar{G}(b)=1$. Then $G''(b)=1$ and $G(b)=0$. Since $G(b)=0$ we have $h(b)=0$ and then $b$ is a witness for $h$ and $G''$. Therefore, $b$ is also a witness for $h$ and every ascendant of $G''$.
Thus $B$ is a witness set for $h$ in all the ascendants of $G$.

Let $G_0$ be any element in $G(\FF_\vee)$ (that is not a descendant or an ascendant). Consider $G_1=\lca(G,G_0)$. By Lemma~\ref{lca}, we have $G_1=G\vee G_0$. Since $G_1$ is an ascendent of $G$ there is a witness $a\in B$ such that $G_1(a)=1$ and $G(a)=0$. Then $G_0(a)=1$, $h(a)=0$ and $a$ is a witness of $h$ and $G_0$. Therefore $A\cup B$ is a specifying set for $h$ with respect to $G(\FF_\vee)$. Since for every $F\in\FF_\vee$ we have $F=G_F\in G(\FF_\vee)$, $A\cup B$ is also a specifying set for $h$ with respect to $\FF_\vee$.

Since
$$\ETD(\FF_\vee,h)\le |A|+|B|\le |\De(G)|+\HS(\As(G)\wedge \bar{G})$$
the result follows.\qed
\end{proof}
In Appendix C we show that $$\ETD(\FF_\vee)=\max_{G\in G(\FF_\vee)} \left(|\De(G)|+\HS(\As(G)\wedge \bar{G})\right).$$
We could have replaced $|\De(G)|$ by $\HS(\overline{\De(G)}\wedge G)$, but Lemma~\ref{llll} in Appendix C shows that they are both equal.

The following result follows immediately from the proof of Lemma~\ref{etdf}
\begin{lemma} For any $h:X\to \{0,1\}$, a specifying set for $h$ with respect to $\FF_\vee$ of size $\deg(\FF_\vee)$ can be found in time $O(nm)$.
\end{lemma}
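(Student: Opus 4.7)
The plan is to implement exactly the construction from the proof of Lemma~\ref{etdf} and show that each step runs in the claimed time. First, test whether $h\imply G_{\max}$ by scanning $X$ once and checking for an $x\in X$ with $h(x)=1$ and $G_{\max}(x)=0$; this takes $O(m)$ time. If such an $x$ exists, return $\{x\}$, which has size $1\le \deg(\FF_\vee)$. Otherwise proceed to locate the pivot element $G\in G(\FF_\vee)$.

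To find $G$ I would use a greedy descent along the Hasse diagram. Start at $G_{\max}$ (for which $h\imply G_{\max}$ by the previous step) and, while the current node $G^{(t)}$ has some immediate descendant $G'\in\De(G^{(t)})$ with $h\imply G'$, set $G^{(t+1)}=G'$. Each test $h\imply G'$ is a single scan of $X$ in $O(m)$ time. Termination is immediate since $G^{(t+1)}$ is a strict descendant of $G^{(t)}$; the resulting $G$ satisfies $h\imply G$ and, by the halting condition, $h\nRightarrow G'$ for every $G'\in\De(G)$. This is precisely the element used in the proof of Lemma~\ref{etdf}.

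Once $G$ is fixed, enumerate $\De(G)$ and $\As(G)$ and, for each immediate descendant $G'$, find a witness $a\in X$ with $G'(a)=0$ and $h(a)=1$ by a single scan of $X$; for each immediate ascendant $G''$, find $b\in X$ with $G''(b)=1$ and $G(b)=0$ analogously. Let $A$ and $B$ be the collected witnesses. Lemma~\ref{etdf} directly gives that $A\cup B$ is a specifying set for $h$ with respect to $\FF_\vee$ and $|A\cup B|\le|\De(G)|+|\As(G)|=\deg(G)\le\deg(\FF_\vee)$, which establishes correctness.

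The main bookkeeping task is the time analysis. Each edge of the Hasse diagram is examined at most a bounded number of times across the whole algorithm: edges that lead out of a node along the descent are examined once when testing whether to descend, and the edges incident to the final $G$ are examined once more while assembling $A$ and $B$. Every such examination performs a single scan of $X$ at cost $O(m)$. Since the total number of edges of the Hasse diagram touched by the algorithm is $O(n)$ (treating access to $\De(\cdot)$, $\As(\cdot)$ and the truth tables of elements of $G(\FF_\vee)$ as part of the input representation of $\FF_\vee$), the overall running time is $O(nm)$. The main obstacle is making the descent amortization tight enough to avoid a spurious $\deg(\FF_\vee)$ factor; charging each Hasse edge to at most a constant number of $O(m)$-scans, as above, is what yields the stated $O(nm)$ bound.
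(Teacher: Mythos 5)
Your proof is correct and takes essentially the same route as the paper, which offers no separate argument for this lemma and simply states that it follows from the construction in the proof of Lemma~\ref{etdf} --- exactly the construction you implement and time-analyze (your use of one witness per immediate ascendant instead of a minimum hitting set for $\As(G)\wedge\bar{G}$ is harmless, since the target bound is $\deg(\FF_\vee)$ rather than the sharper $|\De(G)|+\HS(\As(G)\wedge \bar{G})$). Your claim that only $O(n)$ Hasse edges are examined during the greedy descent is true and deserves its one-line justification: each node of the diagram can be an immediate descendant of at most one node on the descent chain, since any other chain node lying between them would contradict immediacy.
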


By Theorem~\ref{main} we have
\begin{theorem}
There is an algorithm that learns $\FF_\vee$ in time $O(nm)$ and asks at most
$$\deg(\FF_\vee)+\frac{\deg(\FF_\vee)}{\log \deg(\FF_\vee)}\log n\le \left(\frac{\deg(\FF_\vee)}{\log \deg(\FF_\vee)}+1\right) \OPT(\FF_\vee)$$ membership queries.
\end{theorem}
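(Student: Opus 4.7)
The plan is to obtain the theorem by direct composition of two ingredients already in place: the preceding lemma, which supplies a specifying-set oracle, and Theorem~\ref{main}, which converts any such oracle into a decision-tree construction algorithm. No new combinatorics is required; the proof is essentially a bookkeeping exercise.

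First, I would invoke the preceding lemma with parameters $E := \deg(\FF_\vee)$ and $T = O(nm)$: for every $h: X \to \{0,1\}$ it returns a specifying set for $h$ with respect to $\FF_\vee$ of size at most $\deg(\FF_\vee)$, computed in time $O(nm)$. Concretely, the construction is the one implicit in the proof of Lemma~\ref{etdf}: locate the unique $G \in G(\FF_\vee)$ with $h \imply G$ and $h \nRightarrow G'$ for every immediate descendant $G'$ of $G$; then collect one witness per immediate descendant of $G$ together with a hitting element for each immediate ascendent in $\As(G) \wedge \bar G$. The total size is bounded by $|\Ne(G)| \le \deg(\FF_\vee)$, and the search over $G(\FF_\vee)$ plus the witness/hitting-element lookups run in time $O(nm)$.

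Second, I feed this oracle into Theorem~\ref{main}. Applied with the above $E$ and $T$, it immediately produces a decision tree for $A_{\FF_\vee}$ of depth at most
\[
E + \frac{E}{\log E}\log n \;=\; \deg(\FF_\vee) + \frac{\deg(\FF_\vee)}{\log \deg(\FF_\vee)}\log n,
\]
which is exactly the first inequality stated in the theorem. The running-time bound $O(T\log n + nm) = O(nm)$ (per specifying-set query over an $O(\log n)$-depth tree) is inherited from Theorem~\ref{main}.

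Third, to obtain the right-hand inequality I apply the information-theoretic lower bound $\log n \le \OPT(\FF_\vee)$ from (\ref{logli}): this converts the $(\deg(\FF_\vee)/\log \deg(\FF_\vee))\log n$ term into $(\deg(\FF_\vee)/\log \deg(\FF_\vee))\OPT(\FF_\vee)$, while the remaining additive $\deg(\FF_\vee)$ is absorbed by the extra $+\OPT(\FF_\vee)$ summand on the right. There is no genuine obstacle: the only real work was already done inside Lemma~\ref{etdf} in building the specifying set of size $\deg(\FF_\vee)$, and the present theorem just records what Theorem~\ref{main} then delivers, with a final application of (\ref{logli}) to rephrase the bound in terms of $\OPT(\FF_\vee)$.
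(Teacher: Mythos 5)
Your proposal is correct and follows exactly the paper's route: the paper's entire proof is the preceding lemma (a specifying set of size $\deg(\FF_\vee)$ computable in time $O(nm)$) plugged into Theorem~\ref{main}. The one delicate point in your third step --- absorbing the additive $\deg(\FF_\vee)$ into the $+\OPT(\FF_\vee)$ summand, which requires $\deg(\FF_\vee)\le\OPT(\FF_\vee)$ even though the paper only establishes the reverse-direction facts $\OPT\ge\ETD(\FF_\vee)$ and $\ETD(\FF_\vee)\le\deg(\FF_\vee)$ --- is left equally implicit in the paper itself, so your write-up matches the original in both substance and in this small gap.
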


\subsection{Learning Other Classes}

If a specifying set of small size cannot be found in polynomial time then from Theorem~\ref{T1}, \ref{T2} and Lemma~\ref{etdf}, we have
\begin{theorem}\label{mTH} For a class $C$ we have
\begin{enumerate}
\item There is an algorithm that learns $C$ in time $${m\choose \deg(C)}\cdot\ETD(C)\cdot n\log n$$ and asks at most
$$\frac{2\cdot \ETD(C)\cdot\log n}{\log \ETD(C))}
\le \frac{2\cdot \min(\ETD(C)),\log n)}{\log \ETD(C))}\OPT(C)$$ membership queries.

In particular, when $\ETD(C)$ is constant the algorithm runs in polynomial time and its query complexity is (asymptotically) optimal.

\item There is an algorithm that learns $C$ in time $O(nm)$ and asks at most
\begin{eqnarray*}
\DEN(C)\cdot \ln(n) &\le& \min((\ln 2)\DEN(C),\ln n)\cdot \OPT(C)\\
&\le& \min((\ln 2)(\ETD(C)+1),\ln n)\cdot \OPT(C)
\end{eqnarray*} membership queries.

\end{enumerate}
\end{theorem}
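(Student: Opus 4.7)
The plan is to deduce both items of the theorem by applying Theorems \ref{T1} and \ref{T2} to the set $A_C$ associated with the class $C$. Since $\OPT(C),\ \ETD(C),\ \DEN(C)$ were defined as $\OPT(A_C),\ \ETD(A_C),\ \DEN(A_C)$, every decision tree for $A_C$ is (by construction) a membership-query learning algorithm for $C$ with the same query complexity, so the two problems are interchangeable. Hence there is no new reasoning to do about $C$ itself; one only needs to route each displayed bound through this identification.

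For Part 1, I would invoke Theorem \ref{T1} on $A_C$. The query-complexity estimate $(2\cdot\ETD(C)\cdot\log n)/\log\ETD(C)$ and its comparison to $\OPT(C)$ are delivered verbatim by that theorem. The running-time factor requires a small bookkeeping tweak: the enumeration inside the proof of Theorem \ref{T1} searches for a minimum specifying set for a given $h$ by trying subsets of $[m]$ of size $\ETD(C)$, which the algorithm does not know a priori. Lemma \ref{etdf} supplies the a priori upper bound $\ETD(C)\le\deg(C)$, so we enumerate subsets in order of increasing size and halt by size $\deg(C)$ at the latest; this replaces the binomial factor of $\binom{m}{\ETD(C)}$ by $\binom{m}{\deg(C)}$. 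Each candidate set can be tested for being specifying in $O(\ETD(C)\cdot n)$ time, and Moshkov's outer procedure invokes this search $O(\log n)$ times, yielding the stated product $\binom{m}{\deg(C)}\cdot\ETD(C)\cdot n\log n$.

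For Part 2, I would apply Theorem \ref{T2} to $A_C$. The AMMRS-algorithm runs in $O(mn)$ time on $A_C$ and that theorem already establishes
\[
\DEN(C)\cdot\ln n \;\le\; \min\bigl((\ln 2)\DEN(C),\ \ln n\bigr)\cdot\OPT(C).
\]
The final inequality in the displayed chain then follows by substituting the bound $\DEN(C)\le\ETD(C)+1$ from Lemma \ref{SLB}. Nothing else is required, and the running time is inherited directly from AMMRS.

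There is no genuine mathematical obstacle here; the only point that needs care is the exponent of the binomial coefficient in Part 1. Writing $\binom{m}{\ETD(C)}$ would be tighter but is not algorithmically usable because $\ETD(C)$ is unknown to the algorithm, whereas Lemma \ref{etdf} makes $\deg(C)$ available as an upper bound that can be computed in advance, so $\binom{m}{\deg(C)}$ is the quantity that truly bounds the enumeration cost.
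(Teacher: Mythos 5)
Your proposal is correct and matches the paper exactly: the paper gives no separate argument for Theorem~\ref{mTH}, stating that it follows directly from Theorems~\ref{T1} and~\ref{T2} (applied to $A_C$) together with Lemma~\ref{etdf}, which is precisely your route. Your remark explaining why the binomial coefficient is written with $\deg(C)$ rather than $\ETD(C)$ is a sensible reading of the same combination of results.
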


\bibliography{bib}
\bibliographystyle{plain}

\newpage

\section{Appendix A}
In this Appendix we give a proof of some lemmas

\noindent{{\bf Lemma~\ref{optaph}}}.
We have
$$\OPT(A)=\OPT(A+h).$$

\begin{proof} Since $(A+h)+h=A$, it is enough to prove that $\OPT(A+h)\le \OPT(A)$.
Now given a decision tree $T$ for $A$ of depth $\OPT(A)$. For each node, $v$, in $T$ labeled with $j$, such that $h_j=1$, exchange the labels in their outgoing edges. Then change the label of each leaf labeled with $a$ to $a+h$. It is easy to show that the new tree is a decision tree for $A+h$.\qed
\end{proof}

\noindent{{\bf Lemma~\ref{LBo1}}}.~\citep{M83,H95} Let $A\subseteq \{0,1\}^m$ be any set. Then
$$\OPT(A)\ge \max(\ETD(A),\log |A|).$$

\begin{proof} The lower bound $\log |A|$ is the information theoretic lower bound. We now prove the other bound.

Let $T$ be a decision tree for $A=\{a^{(1)},\ldots,a^{(n)}\}$ of minimum depth.
Consider the path $P$ in $T$ that at each level chooses the edge that is labeled with $0$.
Let $S$ be the set of labels in the internal nodes of $P$ and $a^{(j)}$ be the label of the leaf of $P$.
Then $a^{(j)}$ is the only element in $A$ that satisfies $a^{(j)}_i=0$ for all $i\in S$.
Therefore $S$ is a specifying set for $0$ with respect to $A$. Thus $\OPT(A)\ge |S|\ge \ETD z(A)$.
Now, by Lemma~\ref{optaph}, for any $h\in \{0,1\}^m$ we have $\OPT(A)=\OPT(A+h)\ge\ETD z(A+h)=\ETD(A,h)$ and therefore $\OPT(A)\ge \max_h \ETD(A,h)=\ETD(A).$ \qed
\end{proof}

\noindent{{\bf Lemma~\ref{FLB}}}.
We have
$\OPT(A)\ge \DEN(A).$

\begin{proof} Let $B\subseteq A$ be a set such that
$$\DEN(A)\stackrel{(\ref{den})}{=} \frac{|B|-1}{\MIMA(B)}\stackrel{(\ref{mima})}{=}\frac{|B|-1}{\MAX(B+\MAJ(B))}.$$
For every query $i\in [m]$ (what is ``$a_i$''?), the adversary answers $\MAJ(B)_i$. This eliminates at most $\MAX(B+\MAJ(B))$ elements from $B$. Therefore the algorithm is forced to ask at least $({|B|-1})/{\MAX(B+\MAJ(B))}$ queries.\qed
\end{proof}

\noindent{\bf Lemma~\ref{zzz}} We have
$$\ETD(A)\le \ln |A|\cdot\DEN(A)+1.$$

\begin{proof}
There is $h_0\in \{0,1\}^m$ such that
\begin{eqnarray}\label{tempooo}
\ETD(A) \stackrel{L\ref{SSS}}{\le}\SETD(A)\stackrel{(\ref{setd})}{=} \SETD(A,h_0)\stackrel{L\ref{HSSS}}{=}\HS(A+h_0).
\end{eqnarray}
For any $C\subseteq A$ we have
\begin{eqnarray*}
\DEN(C) &\stackrel{(\ref{den})}{=}& \max_{B\subseteq C} \frac{|B|-1}{\MIMA(B)}\\
&\stackrel{(\ref{mima})}{\ge}& \max_{B\subseteq C} \frac{|B|-1}{\MAX(B+h_0)}\\
&{\ge}&  \frac{|C|-1}{\MAX(C+h_0)}\\
\end{eqnarray*}
and therefore, for any $C\subseteq A$ we have
\begin{eqnarray}\label{pino}
\MAX(C+h_0)&\ge& \frac{|C|-1}{\DEN(C)}\stackrel{(\ref{den})}{\ge}\frac{|C+h_0|-1}{\DEN(A)}.
\end{eqnarray}
We now consider the following sequence of subsets of $A+h_0$, $C_0,C_1,\ldots,C_t$
where $C_0=A+h_0$ and
the subset $C_{i+1}$ is defined by $C_i$ as follows: Since (\ref{pino}) is also true for $C_i$ there is $j_i\in [m]$ such that $j_i$ hits at least $(|C_i|-1)/\DEN(A)$ elements in $C_i$. Then $C_{i+1}$ contains all the elements in $C_i$ that are not hit by $j_i$. Then
$$|C_{i+1}|-1\le |C_i|-\frac{|C_i|-1}{\DEN(A)}-1=(|C_i|-1)\left(1-\frac{1}{\DEN(A)}\right).$$
Therefore
$$|C_i|\le (|A|-1)\left(1-\frac{1}{\DEN(A)}\right)^i+1.$$
Let $C_t$ be the first set in this sequence that satisfies $C_{t}=\O$ or $C_{t}=\{0\}$. Define $X=\{j_i|i=0,1,\ldots,t-1\}$.  Then $X$ is a hitting set for $A+h_0$ of size $t$.
Therefore, by (\ref{tempooo}) we have
$$\ETD(A) \le \HS(A+h_0)\le t\le \frac{\ln(|A|-1)}{\ln\left(1-\frac{1}{\DEN(A)}\right)^{-1}}+1\le \DEN(A)\cdot\ln|A|+1.$$
\qed
\end{proof}

We now give proof sketch of
\begin{lemma} There is a set $A\subseteq B_m$ of size $n$ where $m=poly(n)$ such that $\ETD(A)=\Omega(\log n)$ and $\DEN(A)=O(1).$
\end{lemma}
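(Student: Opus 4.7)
The plan is to take $A$ to be a uniformly random $n$-element set drawn from $B_m$ with $m$ a sufficiently large polynomial in $n$ (for concreteness, $m = n^2$), and show that with positive probability this $A$ satisfies both $\DEN(A) = O(1)$ and $\ETD(A) = \Omega(\log n)$. Since $A$ is chosen uniformly, the $m$ coordinates are mutually independent, and within each coordinate the bits across the $n$ elements are i.i.d.\ Bernoulli$(1/2)$; this is exactly the setting where Chernoff bounds will give sharp concentration both for column sums (used to control $\MIMA$) and for the count of elements consistent with a fixed partial assignment (used to control $\ETD$).

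For the bound $\DEN(A) = O(1)$, fix any $B \subseteq A$ with $|B| = k$. For each coordinate $j$, Chernoff gives $\Pr[\min(|B_{j,0}|,|B_{j,1}|) < k/4] \le 2e^{-k/8}$, and by independence across coordinates, $\Pr[\MIMA(B) < k/4] \le (2e^{-k/8})^m$. Union bounding over all subsets $B \subseteq A$ of size $k$ and summing over $k \ge k_0$ for a sufficiently large constant $k_0$ gives $\sum_{k \ge k_0}\binom{n}{k}(2e^{-k/8})^m = o(1)$ when $m = n^2$, so w.h.p.\ $\MIMA(B) \ge |B|/4$ for every $B$ with $|B| \ge k_0$, hence $(|B|-1)/\MIMA(B) \le 4$. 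For $|B| < k_0$, since the elements of $A$ are all distinct w.h.p.\ (a birthday estimate at $m = n^2$), any such $B$ has two distinct elements, so $\MIMA(B) \ge 1$ and $(|B|-1)/\MIMA(B) \le k_0 - 1 = O(1)$. Taking the maximum gives $\DEN(A) = O(1)$.

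For the lower bound $\ETD(A) = \Omega(\log n)$, I would show $\ETD(A,0) > s$ with $s = \lfloor \tfrac{1}{2}\log n \rfloor$. For a fixed $S \subseteq [m]$ with $|S| = s$, the random variable $X_S = |\{a \in A : a_i = 0 \text{ for all } i \in S\}|$ is $\mathrm{Binomial}(n, 2^{-s})$ with mean $\mu \ge \sqrt{n}$, so Chernoff gives $\Pr[X_S \le 1] \le e^{-\Omega(\sqrt{n})}$. Union bounding over $\binom{m}{s} \le m^s = n^{O(\log n)}$ candidate sets $S$, the probability that some $S$ of size at most $s$ is specifying for $0$ is at most $n^{O(\log n)}\cdot e^{-\Omega(\sqrt{n})} = o(1)$, since $\sqrt{n}$ dominates $\log^2 n$ for large $n$. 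Therefore w.h.p.\ $\ETD(A) \ge \ETD(A,0) > s = \Omega(\log n)$, and combining with the previous paragraph, a single $A$ realizing both conclusions exists.

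The main obstacle I expect is the small-$|B|$ case in the $\DEN$ bound: Chernoff gives no useful control when $k$ is a small constant, so I handle it separately by observing that as long as $A$ has pairwise distinct elements (which fails with probability $O(n^2/2^m) = o(1)$ for $m = n^2$), every $B$ with $|B| \ge 2$ automatically has $\MIMA(B) \ge 1$, capping the contribution of small subsets to $\DEN(A)$ by a constant. The rest is routine manipulation of Chernoff tail bounds and union bounds.
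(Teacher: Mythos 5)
Your proposal follows essentially the same route as the paper's own proof sketch: take $A$ uniformly random, use a Chernoff/union bound over all subsets $B$ and all $m$ independent coordinates to get $\MIMA(B)=\Omega(|B|)$ (hence $\DEN(A)=O(1)$), and a Chernoff/union bound over all $\binom{m}{s}$ index sets with $s\approx\tfrac12\log n$ to show no small specifying set for $0$ exists (hence $\ETD(A)\ge\ETD(A,0)=\Omega(\log n)$). Your write-up is in fact slightly more careful than the paper's sketch, since you explicitly handle the constant-size subsets $B$ and the ``at most one consistent element'' condition in the definition of a specifying set.
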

\begin{proof} Consider a random uniform set $A\subseteq B_m$ of size $n$.
The probability that there are $k=(\log n)/2$ entries $i_1,\ldots,t_k\in [m]$
such that no $a\in A$ satisfies $a_{i_1}=a_{i_2}=\cdots=a_{i_k}=0$ is
$${m\choose k}\left( 1-\frac{1}{2^k}\right)^n\le \frac{1}{4}.$$
Therefore, with probability at least $3/4$, $\ETD z(A)\ge k$ and then $\ETD(A)=\Omega(\log n)$.

The probability that some subset $B\subseteq A$ of size $|B|>100$ has $\MAMI(B)\le |B|/100$
is at most
$$2^n \left(\frac{1}{2}\right)^m<\frac{1}{4}.$$
Therefore with probability at least $3/4$, $\MIMA(B)\ge |B|/100$ and $\DEN(A)=O(1)$.\qed
\end{proof}

\section{Appendix B}
In this appendix we give the proof of Lemma~\ref{LBo2} that is the same as the proof of Lemma~3.2 in~\citep{H95}.

\noindent{\bf Lemma~\ref{LBo2}}~\citep{M83,H95} Let $A\subseteq \{0,1\}^m$ of size $n$. Then
$$\OPT(A)\le \ETD(A)+\frac{\ETD(A)}{\log \ETD(A)}\log n\le \frac{2\cdot \ETD(A)}{\log \ETD(A)}\log n.$$

\begin{proof} Consider the algorithm in Figure~\ref{FCC}.
In Step 3, the algorithm defines a hypothesis that is the bitwise majority
of all the vectors in $\A^{(i,1)}$. In Step 7 an index~$y$
is found that maximizes the size of
$$\A^{(i,k)}_{(y,f_y)}:=\left\{g\in \A^{(i,k)}\ |\ g_y=f_y\right\}.$$
Suppose the variable $i$ (in the algorithm) gets the values  $1,2,\ldots,t+1$
and for each $1\le i\le t$ the variable $k$ gets the values $0,1,2,\ldots,k_i$.
Then the number of membership queries asked by the algorithm
is $k_1+\cdots+k_t$. We first prove the following

\noindent {\bf Claim} {\em For $i=1,\ldots,t-1$ we have
$$|\A^{(i+1,1)}|\le \frac{|\A^{(i,1)}|}{\max(2,k_i)}.$$}

\begin{proof} Since $S$ is a specifying set for $h$,
either some $y\in S$ satisfies $h_y\not= a_y$ or
$a$ is the only column in $\A$ that is consistent with $h$ on $S$.
Therefore, since $h=$ Majority$(\A^{(i,1)})$, we have
\begin{eqnarray}\label{fi}
|\A^{(i+1,1)}|\le \frac{|\A^{(i,1)}|}{2}.
\end{eqnarray}

Let $D=\A^{(i,1)}$ and $D'=\A^{(i+1,1)}$. Suppose $y_1,\ldots,y_{k_i}$
are the queries that were asked in the $i$th stage and let $\delta_j=a_{y_j}$
for $j=1,\ldots,k_i$. Then
$$D'=D_{(y_1,\delta_1),(y_2,\delta_2),\ldots, (y_{k_i},\delta_{k_i})}$$
and (disjoint union)
$$D=D_{(y_1,\bar\delta_1)}\cup D_{(y_1,\delta_1),(y_2,\bar\delta_2)}
\cup \cdots\cup D_{(y_1,\delta_1),(y_2,\delta_2)\ldots,(y_{k_i-1},\delta_{k_i-1}),(y_{k_i},\bar\delta_{k_i})}\cup D'.$$
Let $D^{(j)}= D_{(y_1,\delta_1),(y_2,\delta_2)\ldots,(y_{j},\delta_{j})}$, the set of columns in $D$
that are consistent
with the target column on the first $j$ assignments $y_1,\ldots,y_{j}$.
Then
$$D=D^{(0)}_{(y_1,\bar\delta_1)}\cup D^{(1)}_{(y_2,\bar\delta_2)}
\cup \cdots\cup D^{(k_i-1)}_{(y_{k_i},\bar\delta_{k_i})}\cup D'.$$
For $0\le j\le k_i-2$, the fact that we took $y_{j+1}$ for the
$(j+1)$th query and not $y_{k_i}$ implies that
$|D^{(j)}_{(y_{j+1},h_{y_{j+1}})}|\le |D^{(j)}_{(y_{k_j},h_{y_{k_j}})}|$. Therefore, for $0\le j\le k_i-2$
$$|D^{(j)}_{(y_{j+1},\overline{\delta_{j+1}})}|=|D^{(j)}_{(y_{j+1},\overline{h_{y_{j+1}}})}|\ge
|D^{(j)}_{(y_{k_j},\overline{h_{y_{k_j}}})}|=|D^{(j)}_{(y_{k_j},\delta_{k_j})}|\ge |D'|.$$
Therefore
$$|D|=|D^{(0)}_{(y_1,\bar\delta_1)}|+|D^{(1)}_{(y_2,\bar\delta_2)}
|+ \cdots +|D^{(k_i-1)}_{(y_{k_i},\bar\delta_{k_i})}|+| D'|\ge k_i\cdot |D'|.$$
With (\ref{fi}), the result of the claim follows.\qed
\end{proof}
Let $z_i=\max(2,k_i)$. Then
$$1\le |\A^{(t,1)}|\le \frac{n}{\prod_{i=1}^t z_i}$$ and therefore
$\sum_{i=1}^{t-1} \log z_i\le \log n$.  Now for $E\ge 4$ and since $E\le n$
\begin{eqnarray*}
\sum_{i=1}^t k_i&=&k_t+\sum_{i=1}^{t-1}\log z_i\frac{k_i}{\log z_i}\\
&\le& k_t+\max_i\frac{k_i}{\log z_i}\log n \\
&\le& E+\frac{E}{\log E}\log n\le  \frac{2E}{\log E}\log n.\\
\end{eqnarray*}
It is also easy to show that the above is also true for $E=2,3$.

We now prove the time complexity. Finding a specifying set at each
iteration of the While loop takes time $T$ and the number of iterations in at most $\log n$.
This takes $T\log n$ time. Now at the first iteration we define an array
of length $|S|\le E$ that contains $|\A^{(i,1)}_{(z,h_z)}|$ for each $z\in S$.
This takes at most $|\A^{(i,1)}|\cdot E$ time. Now if we
have such array for $\A^{(i,k)}_{(z,h_z)}$, we can find $y$ (in Step 7) in
time $E$ and update the array for $\A^{(i,k+1)}=\A^{(i,k)}_{(y,h_y)}$ in time
$|\A^{(i,k)}\backslash \A^{(i,k)}_{(y,h_y)}|\cdot E$. Therefore the
time of the Repeat loop is at most $2|\A^{(i,1)}|\cdot E$. Since $|\A^{(i+1,1)}|\le |\A^{(i,1)}|/2$,
the time of the While loop is at most $4n\cdot E$. This gives the result.
\qed
\end{proof}
\begin{figure}
\begin{center}
\begin{small}
\noindent{Algorithm: Find the hidden column $a\in \A$.}
\begin{tabbing}
XXXXXXXXXXXXXX\=xxx\=xxxxx\=xxxxx\=xxxxx\=xxxxx \kill
\> \ 1. $i\gets 1$, $k\gets 0$, $\A^{(1,1)} \gets \A$.\\
\> \ 2. While $|\A^{(i,1)}|\ge 2$ do \\
\>\ 3.\> $h\gets$ Majority$(\A^{(i,1)})$\\
\>\ 4.\> Find a specifying set $S$ for $h$ with respect to $\A^{(i,1)}$ \\
\>\ 5.\> Repeat \\
\>\ 6.\>\> $k\gets k+1$.\\
\>\ 7.\>\> Find $y\gets \argmin_{z\in S}\left|\A^{(i,k)}_{(z,h_z)}\right|$\\
\>\ 8.\>\> Ask query ``What is $a_y$''?\\
\>\ 9.\>\> $\A^{(i,k+1)}\gets \A^{(i,k)}_{(y,a_y)}$ \\
\>10.\>\> $S\gets S\bs \{y\}$.\\
\>11.\> Until ($h_y\not= a_y$ or $|\A^{(i,k+1)}|=1$)\\
\>12. $\A^{(i+1,1)}\gets \A^{(i,k+1)}$, $i\gets i+1$, $k\gets 0$\\
\>13. End While\\
\>14. Output the column in $\A^{(i,k)}$.
\end{tabbing}
\end{small}
\caption{\sl An algorithm that find the hidden column $a\in \A$ } \label{FCC}
\end{center}
\end{figure}
We now give another proof
\begin{proof}{\bf of Theorem~\ref{main}}
Consider the following algorithm.
After the $i$th query, the algorithm defines a set $\A_i\subseteq \A$
of all the columns that are consistent with the answers of the queries that were asked so far.
Consider any $0<\epsilon<1$.
Now the algorithm searches for a $j\in [m]$
such that $$\epsilon |\A_i|\le |\{a\in \A_i\ |\ a_j=0\}|\le (1-\epsilon) |\A_i|.$$
If such $j\in [m]$ exists then the algorithm asks ``What is $a_j$?''. Let the answer be~$\xi$.
Define $\A_{i+1}=\{a\in \A_i\ |\ a_j= \xi\}$. Obviously,
in that case,
$$|\A_{i+1}|\le (1-\epsilon) |\A_i|.$$

If no such $j\in [m]$ exists then the algorithm finds a specifying set $T_h$ for
$h:={\rm Majority}(\A_i)$, where ``Majority'' is the bitwise majority function. Then
asks queries ``What is $a_j$'' for all $j\in T_h$. If the answers are consistent
with $h$ on $T_h$ then there is a unique column $c\in \A_i$ consistent
with the answers and the algorithm outputs the index of this column.
Otherwise, there is $j_0\in T_h$ such that $a_{j_0}\not= h_{j_0}$. It is easy to see that
in that case
$$|\A_{i+1}|\le \epsilon |\A_i|.$$
Now when $\epsilon=\ln E/E$ we get
\begin{eqnarray*}
\OPT(A)&\le& \max\left( E\left\lceil\frac{\log n}{\log(1/\epsilon)}\right\rceil,
\left\lceil\frac{\log n}{\log(1/(1-\epsilon))}\right\rceil\right) \\
&\le & \frac{2E}{\log E} \log n.
\end{eqnarray*}
The time complexity of this algorithm is $O(T\log n+mn)$.\qed
\end{proof}
In fact one can prove the bound
$$\OPT(A)\le \left(\frac{E}{\log E}+\frac{E\log\log E}{\log^2E}+o\left(\frac{E\log\log E}{\log^2E}\right)\right)\log n$$
by substituting $\epsilon=(\ln E)/(E(1+\ln\ln E/\ln E)).$

\section{Appendix C}
In this Appendix we find $\ETD(\FF_\vee)$ exactly. We prove
$$\ETD(\FF_\vee)=\max_{G\in G(\FF_\vee)} |\De(G)|+\HS(\As(G)\wedge \bar{G}).$$

The following result is from~\citep{BDVY17}.

%\Dana{From this lemma we can infer the following:}
\begin{lemma}\label{uniqwit} Let $\De(G)=\{G_1,G_2,\ldots,G_t\}$ be the set of immediate descendants of $G$. If $a$ is a witness for $G_1$ and $G$, then $a$ is not a witness for $G_i$ and $G$ for all $i>1$. That is, $G_1(a)=0$, $G(a)=1$, and $G_2(a)=\cdots=G_t(a)=1$.
\end{lemma}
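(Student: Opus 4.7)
The plan is to unpack the definition of ``witness'' in light of the partial order $\imply$, and then to invoke Lemma~\ref{lca} to forbid two distinct immediate descendants of $G$ from vanishing simultaneously on any input where $G$ itself evaluates to $1$.

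First I would extract the two easy equalities. Because $G_1$ is a descendant of $G$ we have $G_1\imply G$, hence $G_1(a)\le G(a)$ pointwise. Since $a$ is a witness for $G_1$ and $G$, these values must differ, forcing $G_1(a)=0$ and $G(a)=1$.

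For the remaining assertion I would argue by contradiction. Fix any $i\in\{2,\ldots,t\}$ and suppose that $G_i(a)=0$. Since $G_1$ and $G_i$ are two distinct immediate descendants of $G$, Lemma~\ref{lca} gives $G_1\vee G_i=\lca(G_1,G_i)=G$. Evaluating at $a$,
$$G(a)=(G_1\vee G_i)(a)=G_1(a)\vee G_i(a)=0\vee 0=0,$$
contradicting $G(a)=1$. Hence $G_i(a)=1=G(a)$ for every $i>1$, so $a$ is not a witness for $G_i$ and $G$. The only conceptual ingredient is Lemma~\ref{lca}; once it is in hand the argument reduces to a one-line evaluation, so I do not foresee any real obstacle.
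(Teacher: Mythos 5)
Your proof is correct. Note, however, that the paper does not actually prove Lemma~\ref{uniqwit}: it imports the statement from~\citep{BDVY17}, so there is no in-paper argument to compare against. Your derivation --- reading off $G_1(a)=0$ and $G(a)=1$ from $G_1\imply G$ together with the witness condition, and then using the ``in particular'' clause of Lemma~\ref{lca} (that $G_1\vee G_i=G$ for two distinct immediate descendants) to exclude $G_i(a)=0$ by pointwise evaluation --- is the natural self-contained proof and relies only on facts already stated in the paper.
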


\subsection{Teaching Dimension}

The minimum size of a witness set for $G$ in $C$ is called the {\it witness size} and is denoted by $\TD(C,G)$. The value $$\TD(C):=\max_{G\in C}\TD(C,G)$$ is called the {\it teaching dimension} of $C$,~\citep{GK95,GRS89,SM90}.
Obviously,
$$\ETD(C,G)\ge \TD(C,G),\mbox{\ \ \ \ and}\ \ \ \ \ETD(C)\ge \TD(C).$$

\subsection{The Proof}
\begin{lemma}\label{llll} For every $G\in \FF_\vee$ we have
$$\TD(\FF_\vee,G)\ge |\De(G)|+\HS(\As(G)\wedge \bar{G}).$$
In particular,
$$\ETD(\FF_\vee)=\TD(\FF_\vee)=\max_{G\in G(\FF_\vee)} \left(|\De(G)|+\HS(\As(G)\wedge \bar{G})\right).$$
\end{lemma}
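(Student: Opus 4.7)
The plan is to prove the first inequality $\TD(\FF_\vee,G)\ge |\De(G)|+\HS(\As(G)\wedge \bar{G})$ by fixing an arbitrary witness set $W\subseteq X$ for $G$ in $\FF_\vee$ and splitting it according to the value of $G$. Let $W_1:=\{a\in W:G(a)=1\}$ and $W_0:=\{a\in W:G(a)=0\}$, so $|W|=|W_1|+|W_0|$. The strategy is to show that immediate descendants of $G$ can only be witnessed by points of $W_1$ and immediate ascendants only by points of $W_0$, which yields two independent lower bounds that add to the desired quantity.

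For $W_1$, I fix any $G_i\in \De(G)$ and pick some $a_i\in W$ with $G(a_i)\ne G_i(a_i)$, which exists because $W$ is a witness set. Since $G_i\imply G$, this forces $G(a_i)=1$ and $G_i(a_i)=0$, placing $a_i$ in $W_1$. Lemma~\ref{uniqwit} then says that any point $a$ with $G(a)=1$ can witness $G$ against at most one immediate descendant, so the map $G_i\mapsto a_i$ is injective and $|W_1|\ge |\De(G)|$.

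For $W_0$, I fix any $G''\in \As(G)$ and take a witness $b\in W$ for $G$ and $G''$. Since $G\imply G''$, one must have $G(b)=0$ and $G''(b)=1$, so $b\in W_0$ and $(G''\wedge \bar{G})(b)=1$. Because $G''\ne G$, the function $G''\wedge \bar{G}$ is not identically zero, so $W_0\subseteq X$ hits every nonzero member of $\As(G)\wedge \bar{G}$, giving $|W_0|\ge \HS(\As(G)\wedge \bar{G})$. Summing the two bounds proves the first inequality.

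The ``in particular'' claim then follows by a sandwich argument: the trivial $\ETD\ge \TD$ combined with the first part gives
\[\ETD(\FF_\vee)\ge \TD(\FF_\vee)\ge \max_{G\in G(\FF_\vee)}\bigl(|\De(G)|+\HS(\As(G)\wedge \bar{G})\bigr),\]
while Lemma~\ref{etdf} supplies the matching upper bound on $\ETD(\FF_\vee)$, so all three quantities coincide. The only real obstacle is the descendants step, where Lemma~\ref{uniqwit} is essential to prevent a single witness from covering several immediate descendants at once; the ascendants step follows almost directly from the definition of a hitting set, and the disjointness $W_1\cap W_0=\emptyset$ is automatic from the binary split.
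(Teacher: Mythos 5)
Your proof is correct and follows essentially the same route as the paper's: both arguments observe that witnesses for immediate descendants must satisfy $G(a)=1$ while witnesses for immediate ascendants must satisfy $G(a)=0$ (hence the two collections are disjoint), invoke Lemma~\ref{uniqwit} to get injectivity over $\De(G)$, identify witnesses over $\As(G)$ with a hitting set for $\As(G)\wedge\bar{G}$, and close the ``in particular'' claim by sandwiching with $\ETD\ge\TD$ and Lemma~\ref{etdf}. The only cosmetic difference is that you partition an arbitrary witness set by the value of $G$, whereas the paper first restricts to $\Ne(G)$ and decomposes $\TD(\Ne(G),G)$.
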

\begin{proof} Let $B$ be a witness set for $G$ in $\Ne(G)$. Take any $G'\in \De(G)$. Then there is $a\in B$ such that $G'(a)=0$ and $G(a)=1$. Since for any ascendent $G''$ of $G$ we have $G''(a)=1$, $a$ is not a witness to $G$ and any of its ascendants. By Lemma~\ref{uniqwit}, $a$ cannot be a witness to any other descendent. In the similar way, a witness for an ascendent of $G$ and $G$ cannot be a witness for any descendent of $G$ and $G$. Therefore,
\begin{eqnarray}\label{lala}
\TD(\FF_\vee,G)\ge \TD(\Ne(G),G)&= &\TD(\De(G),G)+\TD(\As(G),G)\nonumber\\
&=&|\De(G)|+\TD(\As(G),G).
\end{eqnarray}
Now let $S$ be a witness set for $G$ in $\As(G)$. Then for every $G''\in \As(G)$ there is $a\in S$ such that $G''(a)=1$ and $G(a)=0$ which is equivalent to $G''(a)\wedge \bar{G}(a)=1$. Therefore,
$$\TD(\As(G),G)\ge \HS(\As(G)\wedge \bar{G}).$$
This with (\ref{lala}) gives the result.\qed
\end{proof}

\ignore{
\section{Appendix D}
In this Appendix we prove

Let $X$ be a set of nodes of a full $r$-ary tree $T$ of depth $t$. That is, the tree $T$ is of depth $t$ and each internal node in the tree has $r$ children. The number nodes in tree is $|X|=1+r+r^2+\cdots+r^t=(r^{t+1}-1)/(r-1)$ and the number of leaves is $|L(T)|=r^t$.
We define for every two leaves $v$ and $u$ a boolean function $f_{u,v}:X\to \{0,1\}$ where $f(x)=1$ if and only if $x$ is a node in the path from the root to $v$ or to $u$. Define the class ${\cal T}_2=\{f_{u,v}\ |\ u,v \in L(T)\}$. Then
$$|{\cal T}_2|={r^t\choose 2}=\frac{r^t(r^t-1)}{2}.$$ We will consider the case where $t,r>2$.
We first
\begin{lemma} $\ETD({\cal T}_2)=(r-1)t-1.$
\end{lemma}
\begin{proof} Consider any $h:X\to \{0,1\}$. If there is a node $v$ and a child $u$ of $v$ such that $h(u)=1$ and $h(v)=0$ then $\{u,v\}$ is a specifying set for $h$ with respect to ${\cal T}_2$. If $f(v)=1$ and for every children $u$ of $v$, $f(u)=0$ then $v$ and its children is a specifying set for $h$ with respect to ${\cal T}_2$. It remains to consider $h$ that forms pathes from the root to the leaves. If there are $u,v,w\in L(T)$ such that $h(u)=h(v)=h(w)=1$ then $\{u,v,w\}$ is a specifying set for $h$ with respect to ${\cal T}_2$.

Now we have two cases. Either $h$ forms two pathes from the root to the leaves or one path.
If $h$ forms two pathes the set of nodes of the two pathes is a specifying set for $h$ with respect to ${\cal T}_2$. The last case is when $h$ forms a one path in $T$ from the root to a leaf. In that case, for a strong specifying set, we must include all the children of the nodes in the path that are not in the path. For specifying set we may remove from the above set one leaf. The largest specifying set is the latter that has size $(r-1)(t-1)+(r-2)$.\qed
\end{proof}

We now prove
\begin{lemma} $\DEN({\cal T}_2)=$
\end{lemma}
\begin{proof} Let $B\subseteq {\cal T}_2$ such that $\DEN(B)=(|B|-1)/\MAMI(B)$. Since $\MAMI(B)=\max_{x\in X}\min(|B_{x,0}|,|B_{x,1}|)$ there is a node $x_0$ such that $\min(|B_{x_0,0}|,|B_{x_0,1}|)=\MAMI(B)$. We choose a node $x_0$ of minimum depth.
\end{proof}}

\section{Appendix D}

\subsection{Example of Classes}
\ignore{Take the class $\FF=\{f_a\ |\ a=1,2,\ldots,t\}$
where $f_a:\Re\to \{0,1\}$ is the ray $f_a(x)=[x\ge a]$, i.e.,  $f_a(x)=1$ if $x\ge a$ and $0$ otherwise. The equivalence classes $\FF_\vee^*=\{[f_a]\ |\ a=1,2,\ldots,t\}$. The representative element of $[f_a]$ is $f_a\vee f_{a+1}\vee \cdots\vee f_{t}$. We have $|\FF_\vee^*|=t$ and the learning algorithm is equivalent to binary search that asks $\lceil\log |\FF_\vee^*|\rceil=\lceil \log t\rceil$ membership queries. It is easy to show that $\OPT(\FF_\vee)= \lceil\log |\FF_\vee^*|\rceil.$}
Define the class $\Ray_n^m$. The functions are $f_{i_1,i_2,\ldots,i_m}(x_1,\ldots,x_m):[n]^m\to\{0,1\}$ where $f_{i_1,i_2,\ldots,i_m}(x_1,\ldots,x_m)=\bigwedge_{j=1}^m[x_j\ge i_j]$.
It is easy to see that this class contains $O(n^m)$ functions and its Hasse degree is $2m$.
See $\Ray_4^2$ in Figure~\ref{HasseRay24}.

See figure~\ref{RAY23E} for another example of $\FF$ with Hasse degree $3$.

\begin{figure}
\centering
\includegraphics[trim = 0 1cm 0 1cm,width=0.9\textwidth]{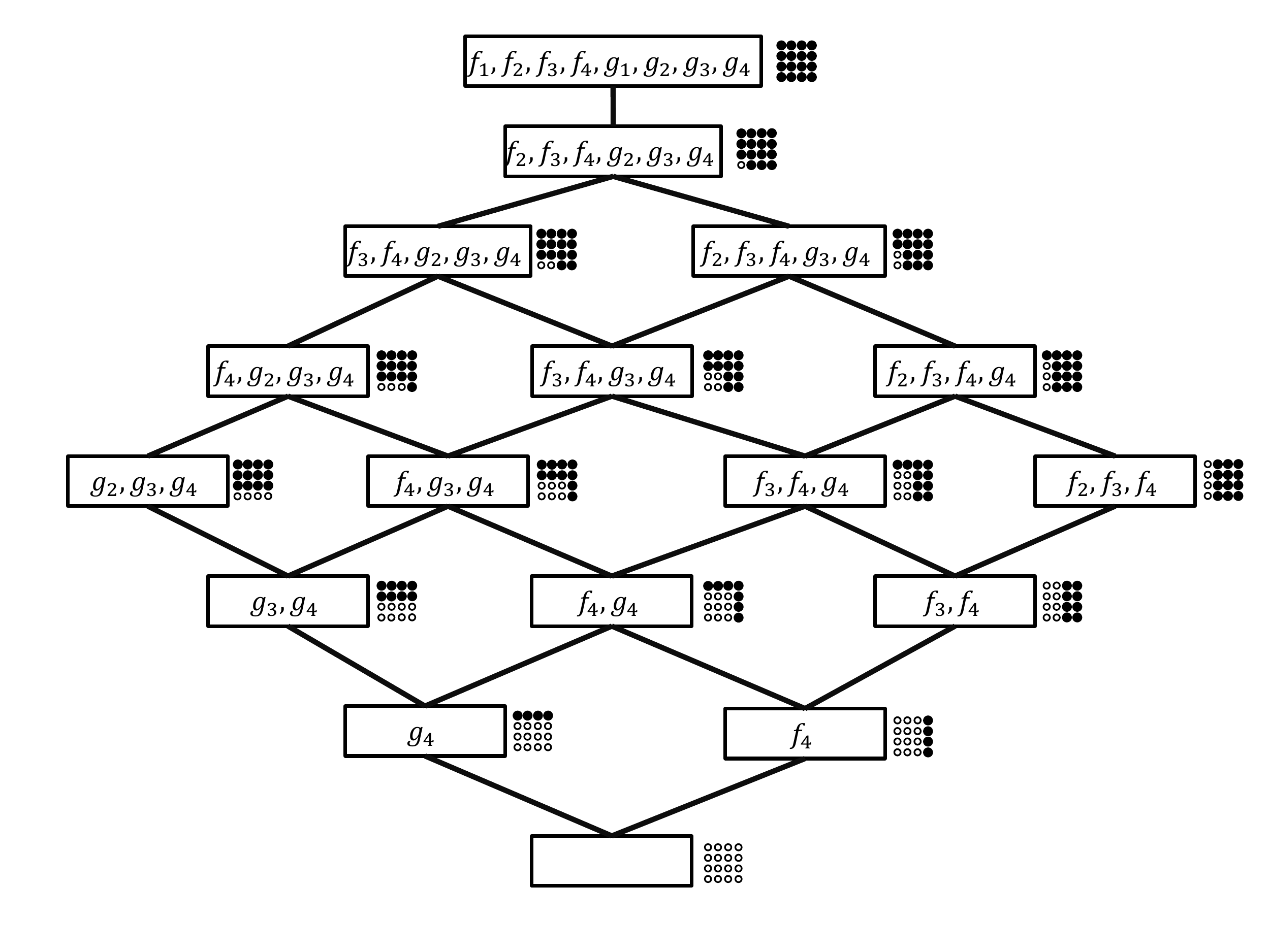}
\caption{Hasse diagram of $\Ray_4^2$. The functions are $f_i(x_1,x_2)=[x_1\ge i]$ and $g_i(x_1,x_2)=[x_2\ge i]$.}
\label{HasseRay24}
\end{figure}

\begin{figure}
\centering
\includegraphics[trim = 0 1cm 0 1cm,width=0.9\textwidth]{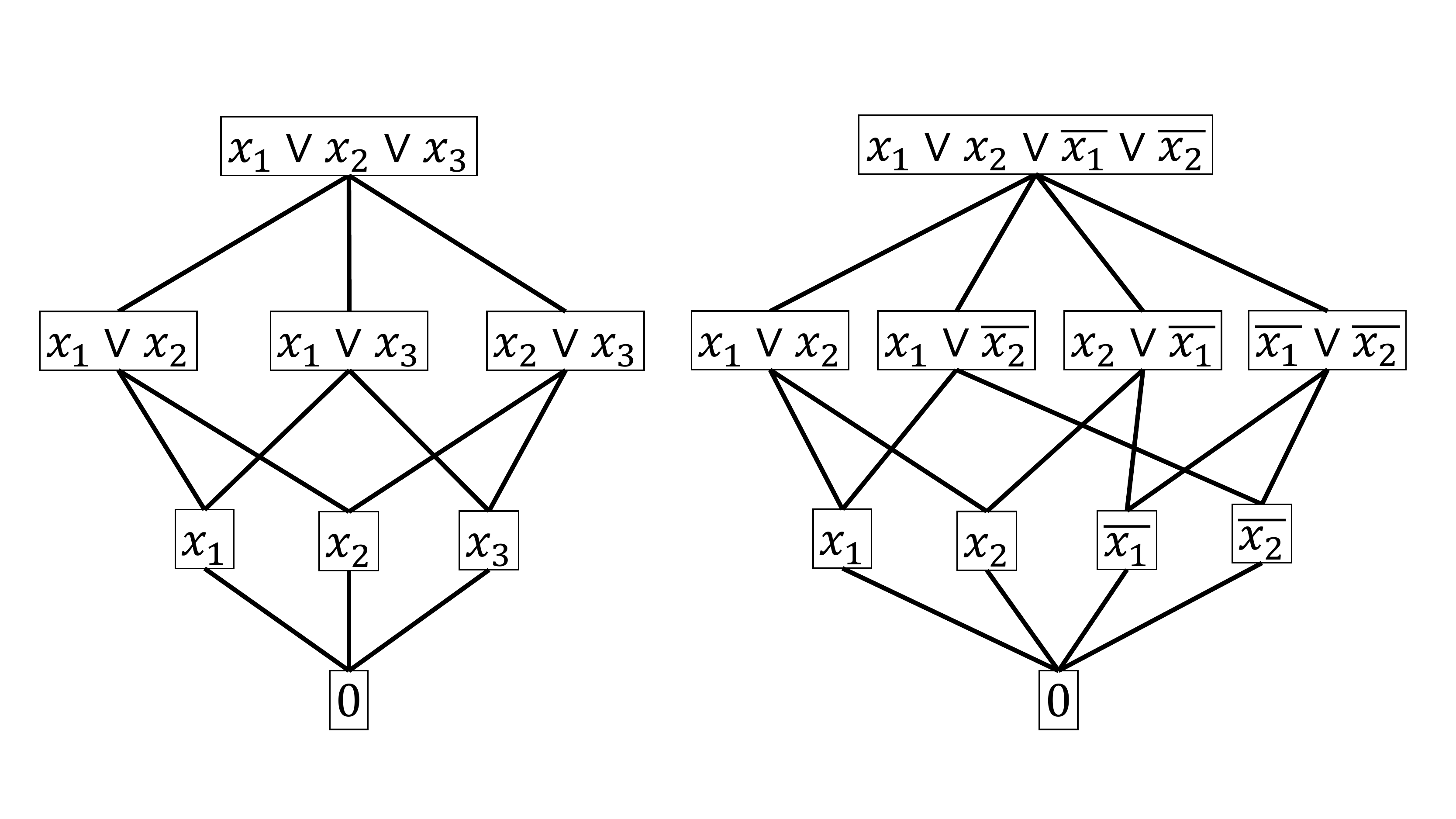}
\caption{Hasse diagram of ...}
\label{HasseClause32}
\end{figure}

\begin{figure}
\centering
\includegraphics[trim = 0 1cm 0 1cm,width=0.9\textwidth]{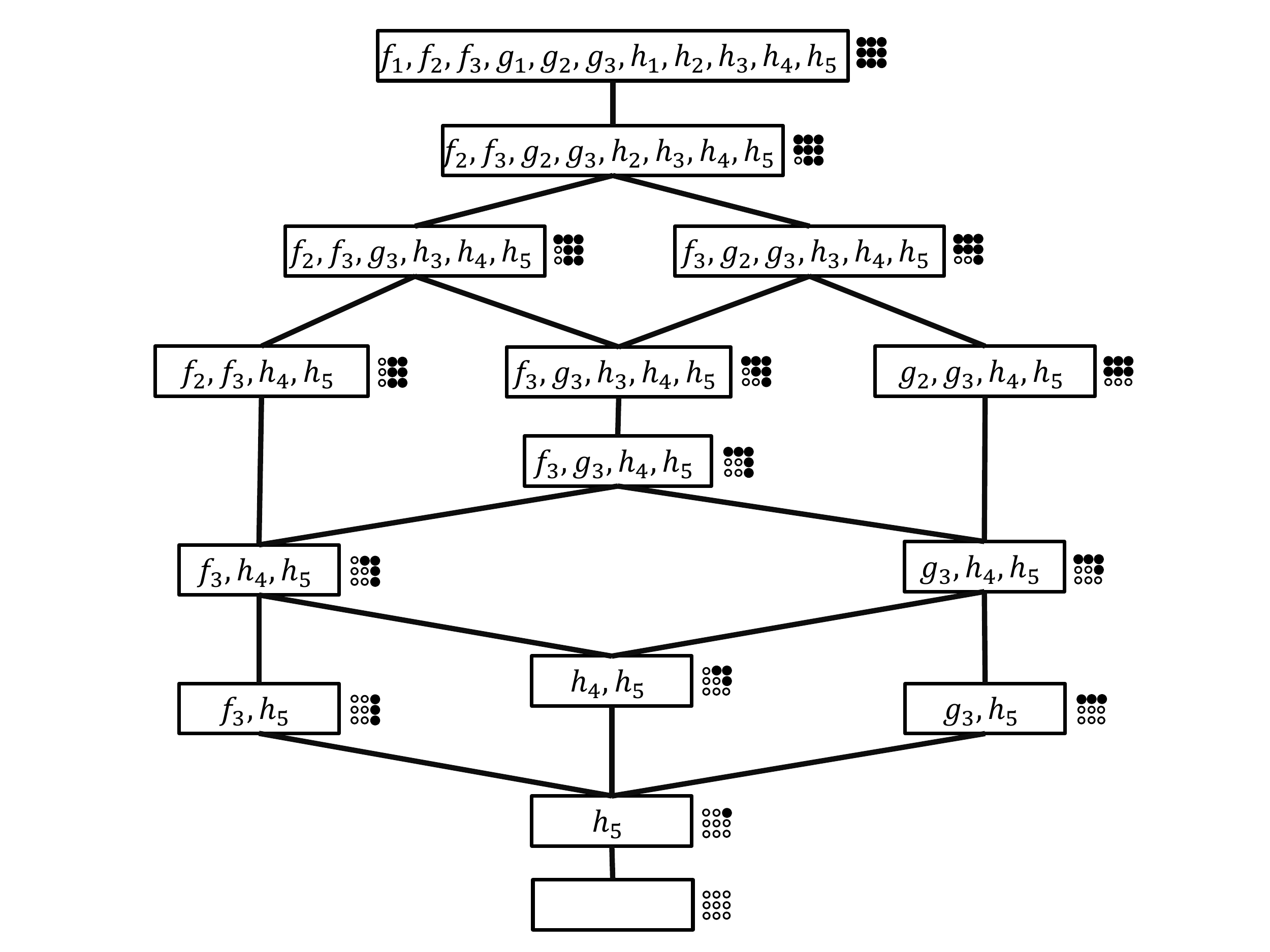}
\caption{Hasse diagram when $\FF=\{f_1,f_2,f_3,g_1,g_2,g_3,h_1,\ldots,h_5\}$ of functions $\{1,2,3\}\times\{1,2,3\}\to \{0,1\}$ where $f_i(x_1,y_1)=[x_1\ge i]$, $g_i(x_1,x_2)=[x_2\ge i]$ and $h_i(x_1,x_2)=[x_1+x_2\ge i+1]$.}
\label{RAY23E}
\end{figure}

\end{document}